\documentclass[a4paper,USenglish,cleveref, autoref]{lipics-v2019}

\newcommand\OPT{\ensuremath{\mathrm{OPT}}\xspace}
\newcommand\TOP{\ensuremath{\mathrm{TOP}}\xspace}

\newcommand\MIN{\ensuremath{\mathrm{MIN}}\xspace}
\newcommand\Tc{\ensuremath{\mathcal{T}}\xspace}
\newcommand\Fc{\ensuremath{\mathcal{F}}\xspace}

\newcommand\Sc{\ensuremath{\mathcal{S}}\xspace}
\newcommand\Xc{\ensuremath{\mathcal{X}}\xspace}

\newcommand{\lbl}[2]{\footnotesize {\color{#1} #2}}
\usepackage{pgf, tikz, tkz-graph}
\usepackage{graphicx}
\usepackage[title]{appendix}
\usepackage{comment}
\usepackage{algorithm}
\usepackage{algpseudocode}
\usepackage{todonotes}
\usepackage{xspace}
\usetikzlibrary{arrows,positioning, calc,trees, automata}

\tikzstyle{vertex}=[draw,fill=black!15,circle,minimum size=20pt,inner sep=0pt]


\bibliographystyle{plainurl}

\title{Approximation algorithms for the vertex-weighted grade-of-service Steiner tree problem} 

\titlerunning{Approximation algorithms for the V-GSST problem}

\author{Faryad Darabi Sahneh}{Department of Computer Science, University of Arizona, Tucson, AZ }{faryad@cs.arizona.edu}{https://orcid.org/0000-0002-1825-0097}{}

\author{Alon Efrat}{Department of Computer Science, University of Arizona, Tucson, AZ }{alon@cs.arizona.edu}{https://orcid.org/0000-0002-1825-0097}{}

\author{Stephen Kobourov}{Department of Computer Science, University of Arizona, Tucson, AZ}{kobourov@cs.arizona.edu}
{https://orcid.org/0000-0002-0477- 2724}{}

\author{Spencer Krieger}{Department of Computer Science, University of Arizona, Tucson, AZ}{skrieger@cs.arizona.edu}{}{}

\author{Richard Spence}{Department of Computer Science, University of Arizona, Tucson, AZ }{rcspence@cs.arizona.edu}{https://orcid.org/0000-0003-4382-466X}{}


\authorrunning{F. Darabi Sahneh et al.}

\Copyright{Faryad Darabi Sahneh, Alon Efrat, Stephen Kobourov, Richard Spence}


\ccsdesc[500]{Theory of computation~Design and analysis of algorithms}
\ccsdesc[300]{Theory of computation~Graph algorithms analysis}


\keywords{Approximation algorithms; vertex-weighted Steiner tree; multi-level graph representation; spider decomposition}

\category{}

\relatedversion{https://arxiv.org/abs/1811.11700}

\supplement{}


\acknowledgements{We thank Clara Cromey, Anand Vastrad, and Sridhar Mocherla for their help in discussions related to this problem.}

\nolinenumbers 


\EventEditors{John Q. Open and Joan R. Access}
\EventNoEds{2}
\EventLongTitle{}
\EventShortTitle{}
\EventAcronym{}
\EventYear{}
\EventDate{}
\EventLocation{}
\EventLogo{}
\SeriesVolume{}
\ArticleNo{}
\usepackage{pgf, tikz, tkz-graph}
\usepackage{wrapfig}
\usetikzlibrary{arrows,positioning, calc,trees, automata}

\tikzstyle{vertex}=[draw,fill=black!15,circle,minimum size=20pt,inner sep=0pt]

\begin{document}

\maketitle







\thispagestyle{empty}

\begin{abstract} 

Given a graph $G = (V,E)$ and a subset $T \subseteq V$ of terminals, a \emph{Steiner tree} of $G$ is a tree that spans $T$. In the vertex-weighted Steiner tree (VST) problem, each vertex is assigned a non-negative weight, and the goal is to compute a minimum weight Steiner tree of $G$. Vertex-weighted problems have applications in network design and routing, where there are different costs for installing or maintaining facilities at different vertices.

We study a natural generalization of the VST problem motivated by multi-level graph construction, the \emph{vertex-weighted grade-of-service Steiner tree problem} (V-GSST), which can be stated as follows: given a graph $G$ and terminals $T$, where each terminal $v \in T$ requires a facility of a minimum grade of service $R(v)\in \{1,2,\ldots\ell\}$, compute a Steiner tree $G'$ by installing facilities on a subset of vertices, such that any two vertices requiring a certain grade of service are connected by a path in $G'$ with the minimum grade of service or better. Facilities of higher grade are more costly than facilities of lower grade. Multi-level variants such as this one can be useful in network design problems where vertices may require facilities of varying priority.

While similar problems have been studied in the edge-weighted case, they have not been studied as well in the more general vertex-weighted case. We first describe a simple heuristic for the V-GSST problem whose approximation ratio depends on $\ell$, the number of grades of service. We then generalize the greedy algorithm of [Klein \& Ravi, 1995] to show that the V-GSST problem admits a $(2 \ln |T|)$-approximation, where $T$ is the set of terminals requiring some facility. This result is surprising, as it shows that the (seemingly harder) multi-grade problem can be approximated as well as the VST problem, and that the approximation ratio does not depend on the number of grades of service.

Finally, we show that this problem is a special case of the directed Steiner tree problem and provide an integer linear programming (ILP) formulation for the V-GSST problem.

\end{abstract}

\setcounter{page}{0}

\newpage

\section{Introduction}
Let $G = (V, E)$ be an undirected, connected graph, and let $T \subseteq V$ be a set of terminals. 
A \emph{Steiner tree} is a subtree of~$G$ that spans $T$, possibly including other vertices. In the classical \emph{Steiner tree (ST) problem}, each edge of $G$ has a positive weight, and the goal is to find a Steiner tree of minimum weight. 
The ST problem is NP-hard~\cite{Karp1972}; it is also APX-hard~\cite{Bern1989} and cannot be approximated within a factor of $96/95$ unless P = NP~\cite{Chlebnik2008}.
  
In the vertex- (or node-) weighted Steiner tree (VST) problem, the vertices of the graph are assigned positive weights, rather than the edges. 
The ST problem with edge and/or vertex weights can be formulated as an instance of VST by replacing each edge $uv$ with two edges $\{uw, wv\}$, where the weight of $w$ equals the weight of $uv$.
Vertex-weighted problems have many applications in network routing, where there may be different costs for installing or maintaining facilities at different vertices. The VST problem is provably harder than ST, and cannot be approximated within a factor of $(1-\varepsilon)\ln |T|$ unless P = NP~\cite{feige1998threshold}, via a simple reduction from the set cover problem~\cite{KLEIN1995104}.  
There are nearly best-possible approximation algorithms for the VST problem that achieve an $O(\log|T|)$ approximation ratio~\cite{GUHA199957,KLEIN1995104}.

In many applications, the terminals may require different levels, priorities, or grades of service~\cite{Balakrishnan1994,Charikar2004ToN,Chuzhoy2008,mateus1994algorithm,mirchandani1996MTT}. For example, when connecting cities with a network, larger cities (hubs) often require higher-quality facilities than smaller ones.

\subsection{Problem definition}



We state the problem naturally in terms of facilities and grades of service. Given an undirected graph $G=(V,E)$, let $T \subseteq V$ be a subset of terminals. Each terminal $v\in T$ has a \emph{required} grade of service $R(v) \in \{1,2,\ldots,\ell\}$. The goal is to install facilities on a subset of vertices, where each terminal $v\in T$ contains a facility of grade $R(v)$ or higher, while ensuring connectivity between vertices requiring the same grade of service or higher. 
Given $v \in V$ and $1 \le i \le \ell$, let $c_i(v)$ denote the cost of installing a facility of grade $i$ on vertex $v$. Naturally, facilities of a higher grade of service are more costly, so we stipulate $0 \le c_1(v) \le c_2(v) \le \ldots \le c_{\ell}(v)$ for all $v \in V$.

\begin{definition}[Vertex-weighted Grade-of-Service Steiner Tree (V-GSST) problem] \label{Def: VGSST}
Given an undirected graph $G = (V,E)$, a set of terminals $T \subseteq V$ with required grades of service $R: T \to \{1,2,\ldots,\ell\}$, and installation costs $c_i(v)$, compute a Steiner tree $G'$ of $G$ with (integer) assigned grades of service $y(v)$ such that the following hold:

\begin{itemize}
    \item For all $v \in T$, the assigned grade of service of $v$ is greater than or equal to its required grade of service $R(v)$, i.e., $1 \le R(v) \le y(v) \le \ell$ for all $v \in T$.
    \item For all terminals $u,v \in T$, the $u$-$v$ path in $G'$ uses vertices with assigned grade of service $\min(R(u), R(v))$ or higher. That is, for each $w$ along the $u$-$v$ path in $G'$, we have $y(w) \ge \min(R(u), R(v))$.
\end{itemize}
The cost of a solution is defined as the sum of the costs of all installed facilities in $G'$, namely $\sum_{v \in V(G')} c_{y(v)}(v)$. The \emph{V-GSST problem} is to find a minimum cost subtree $G^*$ of cost $\OPT$.
\end{definition}

We assume w.l.o.g. that edges have zero cost, as an instance with edge and vertex costs can be converted to an instance with only vertex costs. We may define $c_0(v) = 0$ and $y(v) = 0$ for vertices $v \in V \setminus V(G')$; that is, no facility is installed on $v$. Additionally, because edges have zero cost, a solution can be found given only $y(v)|_{v \in V}$ by finding a spanning tree over the vertices $\{v \mid y(v)=\ell\}$, then iteratively contracting the tree and computing a spanning tree over vertices with $y(v) = \ell-1$, and so on. The case $\ell=1$ is the VST problem.

In the VST problem, it is conventional to assume that terminals have zero cost, as the terminals must be involved in any feasible solution~\cite{GUHA199957}. Similarly for the V-GSST problem, we can assume w.l.o.g. that $c_1(v) = \ldots = c_{R(v)}(v) = 0$ for each $v \in T$; this ignores the ``required'' cost $\sum_{v\in T}c_{R(v)}(v)$, which can be helpful in assessing the quality of a solution. In both problems, an instance with positive terminal costs can be converted to an instance with zero terminal costs. For the V-GSST problem, one can create a dummy vertex $v'$ for each $v \in T$ with zero installation cost, add edge $vv'$, and use $v'$ as a terminal instead of $v$. Lastly, we assume w.l.o.g. that there exists $v \in T$ with $R(v) = \ell$; otherwise $\ell$ can be reduced.

\subsection{Related work} \label{section:related}
The (edge-weighted) ST problem admits a simple 2-approximation~\cite{Gilbert1968}, via a minimum spanning tree of the metric closure\footnote{Given $G=(V,E)$, the \emph{metric closure} of $T \subseteq V$ is the complete graph $K_{|T|}$, where edge weights are equal to the lengths of corresponding shortest paths in $G$.} of~${T}$. The linear program (LP)-based approximation algorithm of Byrka et al.~\cite{Byrka2013} gives a ratio of $\rho = \ln 4 +\varepsilon < 1.39$. 
Details about the ST problem and its variants can be found in~\cite{Hauptmann2015, Proemel2002, Winter1987}, with more edge-weighted network design problems in~\cite{goemans1995general}.

Klein and Ravi~\cite{KLEIN1995104} give a greedy $2 \ln |T|$-approximation to the VST problem (see Section~\ref{sec:greedy}) over terminals $T$.
Guha and Khuller~\cite{GUHA199957} improve the approximation ratio to $1.5 \ln |T|$ via minimum-weight ``$3+$ branch spiders;'' however, the algorithm is not practical for large graphs. Demaine et al.~\cite{Demaine} showed that the VST problem admits a polynomial time constant approximation when restricted to planar graphs. 
Other variants of the VST problem have also been studied, including bi-criteria~\cite{moss2007approximation}, multi-commodity~\cite{kortsarz2009approximating}, $k$-connectivity~\cite{nutov2010TCS}, and degree constrained~\cite{ravi2001approximation}. Exact or near-exact approaches for VST based on Lagrangian relaxation have also been studied~\cite{Cordone2006AnEA,engevall1998}.

Chekuri et al.~\cite{chekuri2007approximation} study the similar node-weighted buy-at-bulk network design problem, defined in terms of sending flows $\delta_i$ to node pairs $s_i$-$t_i$. The cost of routing $x_v$ flow through a node $v$ is given by a sub-additive function $f_v()$. The authors show that the single-source problem (NSS-BB) with non-uniform flow costs admits an $O(\log|T|)$-approximation by giving a randomized algorithm for NSS-BB, then derandomizing it using an LP relaxation. Another somewhat related problem is the \emph{online} vertex-weighted Steiner tree problem, in which the terminals $T$ arrive online. At any stage, a subgraph must connect all terminals that have arrived thus far. Naor et al.~\cite{6108170} describe a randomized $O(\log |V| \log^2 |T|)$-approximation algorithm to the online problem. 

Several results are known on multi-level or grade-of-service Steiner tree problems, where edges are weighted. Balakrishnan et al.~\cite{Balakrishnan1994} give a $(4/3)\rho$-approximation algorithm for the 2-level network design problem with proportional edge costs. 
Charikar et al.~\cite{Charikar2004ToN} describe a simple $4\rho$-approximation for the Quality-of-Service (QoS) Multicast Tree problem with proportional edge costs (termed the \emph{rate model}), which is improved to $e \rho$ through randomized doubling. Karpinski et al.~\cite{Karpinski2005} use an iterative contraction scheme to obtain a $2.454\rho$-approximation. Ahmed et al.~\cite{MLST2018} further improve the approximation ratio to $2.351\rho$. Xue et al.~\cite{Xue2001} show that the grade-of-service Steiner tree problem in the Euclidean plane admits $\frac{4}{3}\rho$ and $\frac{5+4\sqrt{2}}{7}\rho$-approximations for 2 and 3 grades, respectively.

Node-weighted problems on graphs can often be converted to equivalent edge-weighted problems, by requiring that the converted graph is directed. 
Segev~\cite{Segev:1987:NST:34241.34242} gives a simple reduction from VST to the directed Steiner tree (DST) problem, where each edge $uv \in E$ is replaced with two directed edges $(u,v)$ and $(v,u)$, and the weight of edge $(u,v)$ equals the weight of its incoming vertex $v$. The DST problem with $k$ terminals admits a $i(i-1)k^{1/i}$-approximation in time $O(n^i k^{2i})$ for fixed $i \ge 1$~\cite{CHARIKAR199973} using a recursive greedy approach, which implies a polynomial time $O(k^{\varepsilon})$-approximation for fixed $\varepsilon > 0$. By setting $i = \log k$, DST can be $O(\log^2 k)$-approximated in quasi-polynomial time. We show that the V-GSST problem is also a special case of DST; see Appendix~\ref{apdx:dst}. 
\vspace{1em}

\subsection{Our contributions} 


We consider simple top-down and bottom-up approaches (Section~\ref{sec-TDBU}) and show that the top-down approach is an $\ell$-approximation to the V-GSST problem, where $\ell$ is the number of grades of service, if one is allowed access to a VST oracle. If one replaces an oracle with an approximation, this gives a polynomial time $O(\ell \log|T|)$-approximation to the V-GSST problem. However, the bottom-up approach can perform arbitrarily badly.

The main result is the following:
\begin{theorem}\label{thm:main}
There is a polynomial time $(2 \ln |T|)$-approximation algorithm for the V-GSST problem with arbitrary costs $c_i(v)$ for each vertex and grade of service.
\end{theorem}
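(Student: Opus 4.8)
The plan is to generalize the greedy spider-decomposition algorithm of Klein and Ravi~\cite{KLEIN1995104} to the multi-grade setting and to show that the factor $2$ in its guarantee is unaffected by the number of grades $\ell$. First I would reformulate feasibility: for a solution $G'$ with grades $y$, let $\tau_i$ be the subtree of $G'$ on $\{v:y(v)\ge i\}$ containing the terminals of level $\ge i$, so that a clean solution is exactly a nested family $\tau_\ell\subseteq\tau_{\ell-1}\subseteq\cdots\subseteq\tau_1$ with each $\tau_i$ connected and spanning $T^{(i)}:=\{v\in T:R(v)\ge i\}$. Setting $\gamma_i(v):=c_i(v)-c_{i-1}(v)\ge 0$ (with $c_0\equiv 0$), the cost telescopes:
\[
  \sum_{v\in V(G')} c_{y(v)}(v)\;=\;\sum_{i=1}^{\ell}\sum_{v\in V(\tau_i)}\gamma_i(v).
\]
The key consequence of the nesting is that the grade is \emph{unimodal} along every path of $G'$, since each superlevel set meets a path in a contiguous subpath.

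Then I would run the natural greedy over \emph{groups}, initially the singleton terminals, each tagged with its level (its maximum required grade). A \emph{multi-level spider} is a subtree with a center $r$ whose grade is the maximum in the subtree, plus legs reaching several groups, the grade being non-increasing from $r$ down to each group's attachment level; its \emph{density} is its $\gamma$-cost divided by the number of merges it performs (one fewer than its number of feet). Each iteration installs a minimum-density spider and merges the groups it meets into a single group at grade $y(r)$; feasibility is preserved because the spider and the merged groups are individually feasible and nesting holds along the legs.

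The crux, and the main obstacle, is the decomposition lemma: for any clean feasible solution and any $t\ge 2$ current groups, there is a multi-level spider of density at most $2\,\OPT/t$. I would contract each group to a node, apply the spider decomposition of~\cite{KLEIN1995104} to the resulting tree, and use unimodality to orient each spider so that its center is its top-grade vertex and grades fall along the legs. Because the spiders are vertex-disjoint, each vertex's $\gamma$-cost is charged once, so the total spider cost is at most $\OPT$ \emph{independently of $\ell$}; since a spider with $k\ge 2$ feet performs $k-1\ge k/2$ merges and the $t$ feet are split among the spiders, averaging yields a spider of density at most $2\,\OPT/t$. Verifying that the carved pieces are genuine multi-level spiders (this is exactly where unimodality is needed) and that grade-labelled cost is counted only once is the delicate point, and it is what makes the ratio $\ell$-independent.

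Finally, the standard greedy/set-cover analysis turns the per-step bound into the global one: the group count falls from $|T|$ to $1$, and summing $2\,\OPT/t$ against the merges performed telescopes to at most $2\,\OPT\,H_{|T|}\le 2\,\OPT\ln|T|$. For polynomial time, a minimum-density spider is found by trying each center $r$ and center grade $g$, computing the cheapest leg to each group as a shortest path in the layered graph with one copy of $V$ per grade and weights $\gamma_i(v)$, and keeping the $j$ cheapest legs for the best $j$; as in~\cite{KLEIN1995104} the returned subgraph need not be a strict spider but has density no larger than the minimum, which suffices. Confirming that this search is polynomial over all centers, the $\ell$ grades, and $O(|T|)$ leg counts is the remaining routine check.
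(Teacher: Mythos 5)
Your overall route is the same as the paper's: a Klein--Ravi-style greedy that merges trees via minimum-density spiders, a spider decomposition of the optimal solution with grades monotone along the legs, vertex-disjointness of the decomposition so that each vertex of the optimum is charged once (which is precisely what makes the ratio independent of $\ell$), and the harmonic telescoping at the end. The $k-1\ge k/2$ accounting, the reduction of the subset search to the nearest $2,3,\ldots$ groups, and the polynomial-time enumeration over centers and grades all match the paper.

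The gap is exactly at the point you flag as delicate, and your proposed fix does not work as stated. You require every carved piece to be a spider whose \emph{center} is its top-grade vertex, with grades non-increasing along vertex-disjoint legs. But in the optimal grade-respecting tree, the top-grade vertex of a carved spider is the vertex nearest the global root, which generically lies in the interior of one leg of the branching vertex; if you re-center the spider there, all legs to feet on the far side of the branching vertex share the segment from the new center to the branching vertex, so the legs are no longer vertex-disjoint. The greedy's computed cost for that move --- a sum of center-to-group path costs --- can then no longer be bounded by the cost of the spider, since the shared segment is paid once in the spider but counted up to $k-1$ times in the sum, and the per-step density bound $2\OPT/t$ fails. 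The paper's resolution is to give each spider \emph{two} distinguished vertices --- a \emph{root} (the top-grade vertex, which is either the center or a leaf) and a \emph{center} (the branching vertex) --- and to enrich the greedy move accordingly: pick a root tree, a center $v_c$, a grade $i$ to which the root-to-center path is upgraded, and a set of other trees each connected to $v_c$ at its own grade. With this parameterization the root-to-center path and the center-to-leaf paths are genuinely vertex-disjoint inside each decomposition spider, the comparison of the greedy's candidate cost with the spider's cost (Lemma~\ref{Lemma: SpiderCost}) goes through, and the rest of your argument is correct. A smaller omission: costs already paid for a vertex at a lower grade must be credited when it is later upgraded (the paper's residual weights $w_i$); your telescoped $\gamma_i$ layering makes this natural, but it has to be built into the leg costs explicitly for the per-iteration comparison against \OPT to remain valid across iterations.
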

The algorithm, which we refer to as \textsc{GreedyVGSST} (Section~\ref{sec:greedy}), relies on a generalization of the methods by Klein and Ravi~\cite{KLEIN1995104}. This result is surprising, as it shows that the (seemingly harder) multi-level problem can be approximated as well as the VST problem, and the approximation ratio does not depend on the number of grades of service. Unless P = NP, the approximation ratio is within a constant factor of the best possible.

The \textsc{GreedyVGSST} algorithm maintains a set of ``grade-respecting trees'' and carefully merges a subset of these trees so that the newly-formed tree is also grade respecting. The main tool of the analysis is based on the existence of a ``rooted spider decomposition.'' Similar graph decompositions are often used in network design problems~\cite{calinescu2003network,chekuri2007approximation,GUHA199957,KLEIN1995104,kortsarz2009approximating,nutov2010TCS,nutov2010JC,ravi2001approximation}, and we expect them to be applicable to other multi-level network design problems.

Finally, we show that the V-GSST problem is a special case of the directed Steiner tree problem (Appendix~\ref{apdx:dst}), and provide an integer linear programming (ILP) formulation (Appendix~\ref{apdx:ilp}). 


\section{Top-down and bottom-up approaches for V-GSST}
\label{sec-TDBU}
In this section, we describe two simple heuristics, the \emph{top-down} and \emph{bottom-up} approaches.

\subsection{Top-down approach} \label{section:td}
Assume an oracle can compute a minimum-weight VST for an input graph $G$ over terminals $T$ with vertex costs $c(\cdot)$, denoted $E' = VST(G,c,T)$. A top-down approach is as follows: compute a VST of $G$ over terminals $v\in T$ with the highest required grade of service ($R(v) = \ell$), using vertex costs $c_{\ell}(v)$, and install a facility of grade $\ell$ on each vertex spanned by this VST. The cost incurred at this step, which we denote by $\TOP_{\ell}$, equals the total cost of installing these facilities of grade $\ell$. Then, contract this tree into a single terminal with required grade $\ell-1$ and zero cost. Compute a VST of $G$ over the remaining terminals satisfying $R(v) = \ell-1$, and install facilities of grade $\ell-1$ on each vertex spanned by this VST. Continue this process iteratively until a feasible solution is obtained. This approach is summarized in Algorithm~\ref{alg:td}.

\begin{algorithm}
 \caption{Top-down approach}\label{alg:td}
 \begin{algorithmic}[1]
 \Procedure{TopDownVGSST}{$G,c,R$}
 \For{$i={\ell}, \ldots, 1$}
 \State $E_i := VST(G, c_i, \{v : R(v) \ge i\})$ \label{line:td}
 \State Set $y(v) := i$ for each $v$ spanned by $E_i$
 \If{$i > 1$}
 \State Contract subtree $E_{i}$ to a single terminal $t$ with $R(t) = i-1$ and zero cost
 \EndIf
 \EndFor
 \EndProcedure
 \end{algorithmic}
 \end{algorithm}

Let $\TOP$ be the cost of the V-GSST returned by the top-down approach (Algorithm~\ref{alg:td}), let $\TOP_i$ be the total cost of all vertices in the returned solution containing a facility of grade equal to $i$, so that $\TOP = \sum_{i=1}^{\ell} \TOP_i$. 

\begin{proposition} \label{proposition:top-down}
The top-down heuristic (with an oracle) returns a solution whose cost is not worse than $\ell \cdot \OPT$. 
\end{proposition}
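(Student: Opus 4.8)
The plan is to charge the cost $\TOP_i$ incurred at each level $i$ of \autoref{alg:td} to the ``grade-$\ge i$ slice'' of a fixed optimal solution, and then to combine these $\ell$ bounds with a double-counting argument that loses only a factor of $\ell$. Throughout, fix an optimal solution consisting of the subtree $G^*$ together with its grade assignment $y^*$, so that $\OPT = \sum_{v \in V(G^*)} c_{y^*(v)}(v)$.

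First I would isolate the structural property of $G^*$ that makes the argument work. For each $i$, set $F_i := \{v \in V(G^*) : y^*(v) \ge i\}$. I claim that the terminals with $R(v) \ge i$ all lie in a single connected component $C_i$ of $G^*[F_i]$. Indeed, for any two such terminals $u,w$, the connectivity requirement forces the unique $u$-$w$ path in the tree $G^*$ to use only vertices of grade at least $\min(R(u),R(w)) \ge i$; hence this path lies in $F_i$, and $u,w$ are connected there. Thus $C_i$ is a subtree of $G^*$ spanning all terminals of required grade $\ge i$, and since $C_i \subseteq F_i$ its cost under $c_i$ satisfies $\sum_{v \in C_i} c_i(v) \le \sum_{v \in F_i} c_i(v) = \sum_{v : y^*(v) \ge i} c_i(v)$.

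Next I would turn $C_i$ into a feasible competitor for the VST solved at iteration $i$. At that iteration the algorithm computes a minimum-weight VST, under costs $c_i$, over the terminals $\{v : R(v) \ge i\}$ in the graph obtained by contracting the previously built tree $\bigcup_{j>i} E_j$ to a single zero-cost supernode $t$; the weight of this VST is exactly $\TOP_i$. Projecting $C_i$ into this contracted graph (i.e.\ merging the vertices of $C_i$ that fall inside the supernode into $t$) yields a connected subgraph; taking any spanning tree of it gives a feasible VST whose weight is at most $\sum_{v \in C_i} c_i(v)$, since the merged vertices contribute nothing. Combined with the previous paragraph this gives the key per-level inequality $\TOP_i \le \sum_{v : y^*(v) \ge i} c_i(v)$.

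The main obstacle is precisely the feasibility of this projected competitor: it spans the supernode $t$ only if $C_i$ actually meets the accumulated tree $\bigcup_{j>i} E_j$. This is where the standing assumption that some terminal has $R(v) = \ell$ is needed. Such a terminal lies in $E_\ell \subseteq \bigcup_{j>i} E_j$, and since its optimal grade is $y^* \ge R = \ell \ge i$ it also lies in $C_i$; hence the two trees share a vertex and the contraction genuinely links $t$ into the projected tree. (For the base case $i=\ell$ there is no supernode and $C_\ell$ is already a feasible VST.) With this settled, summing over $i$ and exchanging the order of summation gives
\[
\TOP = \sum_{i=1}^{\ell} \TOP_i \le \sum_{i=1}^{\ell} \sum_{v : y^*(v) \ge i} c_i(v) = \sum_{v \in V(G^*)} \sum_{i=1}^{y^*(v)} c_i(v),
\]
and the monotonicity $c_1(v) \le \cdots \le c_\ell(v)$ bounds each inner sum by $y^*(v)\,c_{y^*(v)}(v) \le \ell\, c_{y^*(v)}(v)$, so $\TOP \le \ell \sum_v c_{y^*(v)}(v) = \ell \cdot \OPT$. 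The remaining steps beyond the contraction bookkeeping are the structural observation and a routine exchange of summation.
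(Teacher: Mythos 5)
Your proof is correct and follows essentially the same strategy as the paper's: exhibit, at each level $i$, a feasible competitor for the level-$i$ VST built from the vertices of $G^*$ with $y^*(v)\ge i$, conclude $\TOP_i$ is at most the cost of that slice of $\OPT$, and sum over the $\ell$ levels. Your write-up is somewhat more careful than the paper's one-paragraph sketch --- in particular you verify that the grade-$\ge i$ terminals lie in a single component $C_i$ of $G^*$ and that the projected competitor actually reaches the contracted supernode (via the standing assumption that some terminal has $R(v)=\ell$) --- and your final per-vertex bound $\sum_{i\le y^*(v)} c_i(v)\le \ell\, c_{y^*(v)}(v)$ is a cosmetic variant of the paper's per-level bound $\TOP_i\le\OPT$.
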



The proof is given in Appendix~\ref{apdx:top-down}. The ratio of $\ell$ is tight as shown by example in Figure~\ref{fig:top-down-example-2}. In this example, the top-down approach avoids a non-terminal ``hub'' $v$ whose cost is only slightly more than the cost of all other non-terminals. If an oracle for VST is replaced with an $O(\log |T|)$-approximation, then this gives a polynomial time $O(\ell \log |T|)$-approximation to the V-GSST problem. 

\input{tikz/top-down-example-2.tex}


\subsection{Bottom-up approach}

An analogous ``bottom-up'' approach is to compute a VST over the set of all terminals $T$, using cost function $c_{\ell}(\cdot)$. Installing facilities of grade of service equal to $\ell$ on each vertex of this VST produces a valid solution. We can then demote these vertices' assigned grade of service to locally improve the cost of the solution. However, this poses challenges as $c_{\ell}(\cdot)$ may add vertices $v$ of grade $\ell$ which cannot be demoted, where $c_{\ell}(v) \gg c_{\ell-1}(v)$. 

\section{The \textsc{GreedyVGSST} algorithm}
\label{sec:greedy}
We first review the $(2 \ln |T|$)-approximation algorithm by Klein and Ravi~\cite{KLEIN1995104} for the VST problem (referred to as KR) and then describe the generalization to the V-GSST problem. The KR algorithm maintains a forest $\Fc$; initially, each terminal is a singleton tree. At each iteration, a vertex $v$ as well as a subset $\Sc \subseteq \Fc$ consisting of two or more trees ($|\Sc| \ge 2$) is merged to form a single tree, with the objective of minimizing the \emph{quotient cost} $\frac{c(v)+\sum_{\Tc\in\Sc}d(v,\Tc)} {|\Sc|}$. Here, $d(v,\Tc)$ is the shortest distance from $v$ to any vertex in the tree $\Tc$, excluding endpoint costs. For any given vertex $v$, an optimal subset $\Sc$ and its corresponding quotient cost can be found in polynomial time, as the only subsets $\Sc$ that need to be considered are those consisting of the 2, 3, \ldots, $|\Fc|$ nearest trees from $v$.
The algorithm terminates when $|\Fc|=1$.



\subsection{Setup}
We use the observation that given an instance of V-GSST consisting of a graph $G$, required grades of service $R(v)$, and vertex costs $c_i(v)$,  there exists an optimal solution $G^*$ (in terms of assigned grades of service $y^*(\cdot)$) such that from any vertex $r \in T$ with $R(r) = \ell$, the path from $r$ to any other terminal uses vertices of non-increasing assigned grades of service.

\begin{definition}[Grade-Respecting Tree (GRT)]\label{Def: GRT}
Let $G$ be a graph, and $\Tc$ be a subtree of $G$. Let $y:V({G}) \to \{0,1,\ldots,\ell\}$ be a labeling function that assigns grades of service to vertices in $G$. Let $r \in \Tc$ be a root vertex. We say that $\Tc$ is a \emph{grade-respecting tree} rooted at $r$ if, for all $v \in V(\Tc)$, the path from $r$ to $v$ in $\Tc$ uses vertices of non-increasing grade of service.
\end{definition}

Our generalization to the V-GSST problem relies on maintaining a set $\Fc$ of  GRTs. The notion is similar to that of the rate-of-service Steiner tree~\cite{Xue2001} or QoS Multicast Tree~\cite{Charikar2004ToN} except that we grow and maintain a collection of such trees during the algorithm. The trees in $\Fc$ are not necessarily vertex-disjoint as the same vertex may appear in different GRTs; hence, we avoid the term ``forest.''

For $v \in V(G)$, let $y(v)$ denote the grade of service of $v$ at the current iteration of the algorithm. Initially, $y(v) = R(v)$ for each $v \in T$, and $y(v) = 0$ for $v \in V(G) \setminus T$. The grades $y(\cdot)$ are updated in each iteration, and are returned as output. On each iteration, a subset of the current set $\Fc$ of GRTs is greedily chosen and connected via their roots to form a new GRT $\Tc_{new}$. The set $\Fc$ is updated, as well as the root and grade of service assignments $y(\cdot)$ for vertices in $\Tc_{new}$. 
 The size of $|\Fc|$ is strictly decreasing at each iteration; once $|\Fc|=1$, the resulting GRT is returned as a feasible solution to the V-GSST problem.

To decide which trees to connect, define $\ell$ cost functions $w_1, w_2, \ldots, w_{\ell}: V(G) \to \mathbb{R}_{\ge 0}$ with the interpretation that $w_i(v)$ denotes the ``incremental'' cost of upgrading vertex $v$ from its current grade of service $y(v)$ to $i$. Initially, $w_i(v) = c_i(v)$ for all $i \in \{1,\ldots,\ell\}$ and $v \in V$. 
 The cost $w_i(v)$ is reduced when $v$ is included in $\Tc_{new}$ to reflect that we have already paid a certain cost for $v$.
Additionally, let $d_i(u,v)$ be the cost of a shortest path from $u$ to $v$ using cost function $w_i(\cdot)$, not including the costs of the endpoints $u$ and $v$. As the costs $w_i(\cdot)$ are updated at each iteration, the distances $d_i(u,v)$ also update at each iteration.



\subsection{Initialization}
Initialize $\Fc$ so that each terminal $v \in T$ is its own GRT; initially $|\Fc| = |T|$. Initialize $y(v) = R(v)$ for all $v \in T$, and $y(v) = 0$ for all $v \in V \setminus T$. Lastly, initialize $w_i(v) = c_i(v)$ for all $i \in \{1,\ldots,\ell\}$ and $v \in V$.

\subsection{Iteration step} Each iteration consists of selecting a \emph{root GRT} $\Tc \in\Fc$ rooted at $r$, a \emph{center} $v_c \in V$ (the center may or may not be $r$), an integer $i\leq R(r)$ representing the grade of service that $v_c$ is ``promoted'' to, and a nonempty subset $\Sc = \{\Tc_k\} \subset\Fc$ of GRTs, such that $R(r_k) \le i$ for all roots $r_k$ associated with $\Sc$. By properly connecting $r$ to $v_c$ using facilities of grade $i$, then connecting $v_c$ to each root $r_k$ using facilities of grade $R(r_k)$, we can substitute the $|\Sc|+1$ GRTs with a new GRT $\Tc_{new}$ rooted at $r$. 
Figure~\ref{fig:iteration-example} illustrates an iteration step.

\begin{figure}[th]
\centering
\tikzset{every picture/.style={line width=0.75pt}} 

\begin{tikzpicture}[x=0.75pt,y=0.6pt,yscale=-0.75,xscale=1]

\draw [line width=2]  (176,73) .. controls (176,55.33) and (194.47,41) .. (217.25,41) .. controls (240.03,41) and (258.5,55.33) .. (258.5,73) .. controls (258.5,90.67) and (240.03,105) .. (217.25,105) .. controls (194.47,105) and (176,90.67) .. (176,73) -- cycle ;
\draw    (217.25,73) -- (239.5,62) ;

\draw [shift={(217.25,73)}, rotate = 333.69] [color={rgb, 255:red, 0; green, 0; blue, 0 }  ][fill={rgb, 255:red, 0; green, 0; blue, 0 }  ][line width=0.75]      (0, 0) circle [x radius= 3.35, y radius= 3.35]   ;
\draw    (239.5,62) -- (250.5,73) ;

\draw    (250.5,73) -- (274.5,68) ;

\draw    (274.5,68) -- (303.5,73) ;
\draw [shift={(303.5,73)}, rotate = 9.78] [color={rgb, 255:red, 0; green, 0; blue, 0 }  ][fill={rgb, 255:red, 0; green, 0; blue, 0 }  ][line width=0.75]      (0, 0) circle [x radius= 3.35, y radius= 3.35]   ;

\draw   (348,40.5) .. controls (348,26.97) and (362.89,16) .. (381.25,16) .. controls (399.61,16) and (414.5,26.97) .. (414.5,40.5) .. controls (414.5,54.03) and (399.61,65) .. (381.25,65) .. controls (362.89,65) and (348,54.03) .. (348,40.5) -- cycle ;
\draw   (320,105.5) .. controls (320,91.97) and (334.89,81) .. (353.25,81) .. controls (371.61,81) and (386.5,91.97) .. (386.5,105.5) .. controls (386.5,119.03) and (371.61,130) .. (353.25,130) .. controls (334.89,130) and (320,119.03) .. (320,105.5) -- cycle ;
\draw   (399,91.5) .. controls (399,77.97) and (413.89,67) .. (432.25,67) .. controls (450.61,67) and (465.5,77.97) .. (465.5,91.5) .. controls (465.5,105.03) and (450.61,116) .. (432.25,116) .. controls (413.89,116) and (399,105.03) .. (399,91.5) -- cycle ;
\draw  [dash pattern={on 0.84pt off 2.51pt}] (314,34.2) .. controls (314,19.73) and (325.73,8) .. (340.2,8) -- (446.3,8) .. controls (460.77,8) and (472.5,19.73) .. (472.5,34.2) -- (472.5,112.8) .. controls (472.5,127.27) and (460.77,139) .. (446.3,139) -- (340.2,139) .. controls (325.73,139) and (314,127.27) .. (314,112.8) -- cycle ;
\draw    (303.5,73) -- (324.5,44) ;

\draw    (324.5,44) -- (366.5,50) ;

\draw    (366.5,50) -- (381.25,40.5) ;
\draw [shift={(381.25,40.5)}, rotate = 327.22] [color={rgb, 255:red, 0; green, 0; blue, 0 }  ][fill={rgb, 255:red, 0; green, 0; blue, 0 }  ][line width=0.75]      (0, 0) circle [x radius= 3.35, y radius= 3.35]   ;

\draw    (303.5,73) -- (338.5,76) ;

\draw    (338.5,76) -- (349.5,91) ;
\draw [shift={(349.5,91)}, rotate = 53.75] [color={rgb, 255:red, 0; green, 0; blue, 0 }  ][fill={rgb, 255:red, 0; green, 0; blue, 0 }  ][line width=0.75]      (0, 0) circle [x radius= 3.35, y radius= 3.35]   ;

\draw    (338.5,76) -- (369.5,71) ;

\draw    (369.5,71) -- (392.5,83) ;

\draw    (392.5,83) -- (418.5,81) ;

\draw    (418.5,81) -- (432.25,91.5) ;
\draw [shift={(432.25,91.5)}, rotate = 37.37] [color={rgb, 255:red, 0; green, 0; blue, 0 }  ][fill={rgb, 255:red, 0; green, 0; blue, 0 }  ][line width=0.75]      (0, 0) circle [x radius= 3.35, y radius= 3.35]   ;

\draw (216,83) node   {$r$};
\draw (219,119) node   {$\Tc$};
\draw (305,84) node   {$v_{c}$};
\draw (440,40) node   {$\Sc \subset \Fc$};

\end{tikzpicture}
\caption{Illustration of an iteration step in $\textsc{GreedyVMLST}$ for some set $\Sc$ with $|\Sc|=3$. The root of the newly formed tree $\Tc_{new}$ is $r$.}
\label{fig:iteration-example}
\end{figure}

At each iteration, we select a root GRT, $v_c$, $i$, and $\Sc$ in order to minimize the \emph{cost-to-connectivity} ratio $\gamma$, defined as follows:
\begin{equation} \label{eqn:gamma}
    \gamma=\frac{d_i(r,v_c)+w_i(v_c)+\sum_{k=1}^{|\Sc|}d_{R(r_k)}(v_c,r_k)}{1+|\Sc|}
\end{equation}

The expression $d_i(r,v_c) + w_i(v_c)$ is the cost of upgrading all vertices (including $v_c$) along a shortest $r$-$v_c$ path to grade $i$. The summation $\sum_{k=1}^{|\Sc|} d_{R(r_k)} (v_c, r_k)$ represents the incremental cost of upgrading the vertices from $v_c$ to each root $r_k$ to the appropriate grade of service; this may overcount costs of vertices that appear on multiple center-to-root paths. The denominator $1+|\Sc|$ represents the number of GRTs that are merged.

Once we have selected $r$, $v_c$, $i$, and $\Sc$ that minimizes $\gamma$, we update the assigned grades $y(\cdot)$ for vertices spanned by $\Tc_{new}$ as follows: for each $v$ on a shortest $r$-$v_c$ path, set $y(v) \gets \max(y(v), i)$. For each $v$ on a shortest $v_c$-$r_k$ path, set $y(v) \gets \max(y(v), R(r_k))$. If $v$ appears on multiple center-to-root paths, then $y(v)$ is set to the maximum over all such $R(r_k)$ grades that $v$ connects.

The \textsc{GreedyVGSST} algorithm is summarized in Algorithm~\ref{alg:MLGH}. Figure~\ref{fig:iteration-example-2} shows a concrete example of the execution of \textsc{GreedyVGSST}, using proportional vertex costs to simplify the presentation. Recall we assume $c_1(v) = \ldots = c_{R(v)}(v) = 0$ for each $v \in T$.

 \begin{algorithm}
 \caption{The \textsc{GreedyVGSST} algorithm}\label{alg:MLGH}
 \begin{algorithmic}[1]
 \Procedure{GreedyVGSST}{$G, c, R$}
 \State Initialize $\Fc$ so that each terminal $v \in T$ is a singleton GRT with $y(v)=R(v)$
 \State For each $i = 1, \ldots, \ell$ and $v \in V$, initialize $w_i(v) = c_i(v)$
 \While {$|\Fc|>1$}
 \State Find root GRT $\Tc\in\Fc$, center $v_c$, integer $i$, and $\Sc\subset\Fc$ minimizing $\gamma$.
 \State Update $y(v)$ for $v$ on the $r$-$v_c$ path, as well as for center-to-root paths
 \State Delete $\Tc$ and all GRTs in $\Sc$ from $\Fc$. Add GRT $\Tc_{new}$ to $\Fc$.
 \State Update weight functions: for $v\in V(\Tc_{new})$, set $w_j(v)=0$ if $j\leq y(v),$ and $w_j(v)=w_j(v)-w_{y(v)}(v)$ if $j>y(v)$.
 \EndWhile
 \EndProcedure
 \end{algorithmic}
 \end{algorithm}
 
 \begin{figure}[H]
\centering
\begin{tikzpicture}[scale=0.55]
\def\shift{-5}

\node[align=left] at (-1,1) {(a)};
\node[align=left] at (-1-2*\shift,1) {(b)};
\node[align=left] at (-1,1+\shift) {(c)};
\node[align=left] at (-1-2*\shift,1+\shift) {(d)};


\tikzstyle{every node}=[draw,shape=circle, fill=red!40, inner sep=1.5pt, minimum size=5pt];

\begin{scope}
\node (A) at (0,1) [label=below:\lbl{red}{2}]{5};

\node (B) at (1,1) [fill=white]{7};

\node (C) at (2,1) [label=below:\lbl{red}{1}]{8};

\node (D) at (2.5,3) [fill=white]{4};

\node (E) at (3,1) [fill=white]{3};

\node (F) at (5,3) [label=below:\lbl{red}{2}]{3};

\node (G) at (4,0) [label={below:\lbl{red}{1}}]{6};
        
\node (H) at (4,2)[fill=white]{1};
        
\draw (A) -- (B) -- (C) -- (E) -- (H) -- (F) -- (D) -- (C);
\draw (E) -- (G);
\end{scope}

\begin{scope}[shift={(-2*\shift,0)}]
\tikzstyle{every node}=[draw,shape=circle, fill=red!40, inner sep=1.5pt, minimum size=5pt, line width = 2];

\node (A) at (0,1) [label=below:\lbl{blue}{2}]{5};

\node (C) at (2,1) [label=below:\lbl{blue}{1}]{8};

\node (F) at (5,3) [label=below:\lbl{blue}{2}]{3};

\node (G) at (4,0) [label={below:\lbl{blue}{1}}]{6};
\end{scope}

\begin{scope}[shift={(0,\shift)}]

\node (A) at (0,1) [line width = 2, label=below:\lbl{blue}{2}]{5};

\node (C) at (2,1) [label=below:\lbl{blue}{1}]{8};

\node (E) at (3,1) [fill=white,label=above:$v_c$, label=below:\lbl{blue}{1}]{3};

\node (F) at (5,3) [line width = 2, label=above:$r$,label=below:\lbl{blue}{2}]{3};

\node (G) at (4,0) [label={below:\lbl{blue}{1}}]{6};
        
\node (H) at (4,2)[fill=white,label=below:\lbl{blue}{1}]{1};
        
\draw (F) -- (H) -- (E) -- (C);
\draw (E) -- (G);
\end{scope}

\begin{scope}[shift={(-2*\shift,\shift)}]

\node (A) at (0,1) [line width = 2, label=above:$r$,label=below:\lbl{blue}{2}]{5};

\node (B) at (1,1) [fill=white,label=below:\lbl{blue}{2}]{7};

\node (C) at (2,1) [label=below:\lbl{blue}{2}]{8};

\node (E) at (3,1) [fill=white,label=above:$v_c$, label=below:\lbl{blue}{2}]{3};

\node (F) at (5,3) [label=below:\lbl{blue}{2}]{3};

\node (G) at (4,0) [label={below:\lbl{blue}{1}}]{6};
        
\node (H) at (4,2)[fill=white,label=below:\lbl{blue}{2}]{1};
        
\draw (F) -- (H) -- (E) -- (C);
\draw (A) -- (B) -- (C) -- (E) -- (G);
\end{scope}
\end{tikzpicture}

\caption{Illustration of \textsc{GreedyVGSST}. (a): Input graph with $\ell=2$, with terminals and positive required grades of service $R(\cdot)$ shown in red, and proportional costs (e.g. the terminal $v$ with cost 8 has $(c_1(v), c_2(v)) = (0,8)$, while the non-terminal $u$ with cost 3 has $(c_1(u), c_2(u)) = (3,6)$). (b): Initial set of singleton GRTs $\Fc$ with $|\Fc| = 4$. (c): Result after choosing $r$ and $v_c$ as shown, $i = 1$, and $\Sc$ of size 2, which minimizes $\gamma$. The value of $\gamma$ is $\gamma = \frac{1 \cdot 1 + 1 \cdot 3 + 0 + 0}{1+2} = \frac{4}{3}$. Note that $w_1(v_c), w_2(v_c) = (3,6)$ initially; after the first iteration, we have $(w_1(v_c), w_2(v_c)) = (0,3)$. (d): Choosing $r$ and $v_c$ as shown, $i=2$, and $\Sc$ of size 1 minimizes $\gamma$. In this case, $d_2(r, v_c) = 2 \cdot 7 + 1 \cdot 8 = 22$, $w_2(v_c) = 3$, and $d_2(v_c, r_1) = 1$ where $r_1$ is the root of the single tree in $|\Sc|$. The value of $\gamma$ is $\gamma = \frac{22+3+1}{1 + 1}=13$. Since $|\Fc| = 1$, \textsc{GreedyVGSST} terminates after two iterations with a cost of $2 \cdot 7 + 1 \cdot 8 + 2 \cdot 3 + 2 \cdot 1 = 30$.}
\label{fig:iteration-example-2}
\end{figure}
 
Observe that, as edges are not weighted, the actual trees in $\Fc$ at each iteration are not very important until after the final iteration; it suffices to only keep track of roots in $\Fc$ along with the vertices associated with each root.

\begin{lemma} \label{lemma:polynomial} A choice of $r$, $v_c$, $i$, and $\Sc$ that minimizes $\gamma$ can be found in polynomial time.
\end{lemma}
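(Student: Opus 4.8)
The plan is to exploit the same structure that makes the Klein--Ravi quotient computation tractable: every parameter except the subset $\Sc$ ranges over polynomially many values and can be enumerated directly, while for each fixed choice of the remaining parameters the optimal $\Sc$ has a simple ``prefix'' form. As a preprocessing step I would compute, for every grade $i \in \{1,\ldots,\ell\}$ and every ordered pair $(u,v) \in V \times V$, the distance $d_i(u,v)$ under the current cost function $w_i(\cdot)$. This amounts to $\ell$ all-pairs-shortest-path computations (one Dijkstra run per source per grade), which is polynomial; the endpoint-exclusion convention only changes how a path cost is tallied and does not affect tractability. With these tables in hand, every term appearing in $\gamma$ is evaluable in constant time.

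Next I would enumerate the triple $(\Tc, v_c, i)$: there are at most $|\Fc| \le |T|$ choices for the root GRT $\Tc$ (equivalently its root $r$), at most $n$ choices for the center $v_c \in V$, and at most $\ell$ choices for the grade $i$ with $1 \le i \le R(r)$. For each of these $O(|T|\,n\,\ell)$ triples, the quantity $A := d_i(r,v_c) + w_i(v_c)$ and the constant term $1$ in the denominator are fixed, so it remains only to choose a nonempty $\Sc \subseteq \Fc \setminus \{\Tc\}$ subject to the eligibility constraint $R(r_k) \le i$, minimizing
\[
\gamma = \frac{A + \sum_{\Tc_k \in \Sc} b_k}{1 + |\Sc|}, \qquad b_k := d_{R(r_k)}(v_c, r_k).
\]

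For fixed $(\Tc, v_c, i)$ I would restrict attention to the eligible trees $\{\Tc_k : R(r_k) \le i\}$ and sort them in nondecreasing order of their connection cost $b_k$. The key claim is that some minimizer $\Sc$ is a prefix of this sorted list: if $|\Sc| = m$ but $\Sc$ is not the set of the $m$ cheapest eligible trees, then replacing any $\Tc_j \in \Sc$ by a strictly cheaper eligible $\Tc_{j'} \notin \Sc$ leaves the cardinality unchanged and does not increase the numerator, hence does not increase $\gamma$. Consequently it suffices to evaluate $\gamma$ on each of the at most $|\Fc|$ prefixes (cardinalities $m = 1, 2, \ldots$) and keep the best, maintaining the prefix sums $\sum b_k$ incrementally so that each triple is processed in $O(|\Fc|\log|\Fc|)$ time. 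Taking the minimum over all triples yields the global minimizer of $\gamma$ in polynomial time.

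The one step requiring care is the interaction between the exchange argument and the eligibility constraint: because the required grades $R(r_k)$ differ across trees and the condition $R(r_k) \le i$ couples the admissible set to the chosen grade $i$, I would fix $i$ \emph{first}, so that ``eligible'' is well-defined, and only then sort by $b_k$ within that set. Once $i$ is fixed, the exchange argument concerns only the scalar quotient $\frac{A + \sum b_k}{1+|\Sc|}$ and is routine; the substantive point to verify is simply that scanning the sorted prefixes over all cardinalities captures a global optimizer, which the exchange argument guarantees.
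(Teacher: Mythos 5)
Your proof is correct, and its core is the same as the paper's: fix everything except $\Sc$, observe that for a fixed cardinality the optimal $\Sc$ consists of the nearest eligible trees under $d_{R(r_k)}(v_c,\cdot)$, and scan the sorted prefixes. The only real difference is how the root GRT is handled: you enumerate the root $r$ explicitly as part of an $O(|T|\cdot|V|\cdot\ell)$ outer loop, whereas the paper enumerates only the $|V|\ell$ pairs $(v_c,i)$ and then determines the best root by a short case analysis (either the eligible GRT with $R(r')>i$ whose root is closest to $v_c$ under $w_i(\cdot)$, or one of the grade-$i$ trees promoted out of $\Sc$). Both arguments establish polynomial time, so the lemma is proved either way; your version is simpler and avoids the paper's slightly delicate root-selection cases, at the cost of an extra factor of $|T|$ per iteration, which also means it does not by itself support the paper's subsequent remark that a \textsc{GreedyVGSST} iteration is only $\ell$ times more expensive than a Klein--Ravi iteration. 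One small bookkeeping point: when $R(r)=i$ the root tree $\Tc$ itself satisfies the eligibility test $R(r_k)\le i$, so you should state explicitly that it is removed from the sorted candidate list before the prefix scan (you implicitly do this by writing $\Sc\subseteq\Fc\setminus\{\Tc\}$).
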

\begin{proof}
For a fixed center $v_c$ and integer $i$, sort all trees in $\Fc$ whose root $r_k$ has $R(r_k) \le i$ by their ``distance'' to $v_c$, namely $d_{R(r_k)}(v_c,r_k)$. The best choice for subset $\Sc$ can be found through checking only subsets with the nearest $2,3,\ldots, |\Fc|$ trees. Therefore, for fixed $v_c$ and $i$, we can find $\Sc$ that minimizes $\frac{w_i(v_c)+\sum_{k=1}^{|\Sc|}d_{R(r_k)}(v_c,r_k)}{|\Sc|}$ in polynomial time. If $i>\max_{k}R(r_k)$, then we can improve $\gamma$ by setting $i=\max_{k} R(r_k)$.

Lastly, find a GRT $\Tc' \in \Fc$ with root $r'$ satisfying $R(r') > i$, whose root is closest to $v_c$ under cost function $w_i(\cdot)$. If such a GRT exists, and if choosing $r'$ as the root lowers $\gamma$, then accept this GRT as the root GRT. Otherwise, use one of the GRTs in $\Sc$ whose root has grade equal to $i$ as the root tree, remove it from $\Sc$, set it as the root GRT $\Tc$, and set $i$ as the maximum grade of the remaining GRT roots in $\Sc$. Therefore, for a fixed center vertex $v_c$ and grade $i$, we can find a root $r$ and subset $\Sc$ that minimizes $\gamma$. As there are $|V|\ell$ choices for $v_c$ and $i$, a choice of $r$, $v_c$, $i$, and $\Sc$ that minimizes $\gamma$ can be found in polynomial time.
\end{proof}

Compared to the KR algorithm, an iteration in the \textsc{GreedyVGSST} algorithm requires determining two new elements: a root GRT $\Tc$, and an integer $i$ representing the grade of service that $v_c$ is upgraded to. However, the above proof indicates that a \textsc{GreedyVGSST} iteration is only $\ell$ times
more expensive than one for the KR algorithm.


 
\subsection{Analysis of {\sc GreedyVGSST}}

The analysis of the KR algorithm~\cite{KLEIN1995104} uses a \emph{spider decomposition} of the optimal VST solution. First, we introduce the notion of a ``locally optimal solution'' with respect to a subset $M$ of vertices. Recall that $y(v)$ denotes the assigned grade of service of $v$.

\begin{definition}[$M-$Optimized GRT]\label{def:m-optimized}
Let $\Tc$ be a GRT rooted at $r$, and let $M \subseteq V(\Tc)$ such that $r\in M$. Then $\Tc$ is \emph{$M$-optimized} if all leaves of $\Tc$ are contained in $M$, and for any vertex $v \in V(\Tc) \setminus M$, we have that $y(v) = \max y(w)$ for all $w \in M$ in the subtree rooted at $v$.
\end{definition}

Thus, an $M$-optimized GRT does not unnecessarily use higher grades of service to reach vertices in $M$ from $r$; see Figure~\ref{fig:lrt}-\ref{fig:m-optimized} for an example. There necessarily exists a minimum V-GSST $G^*$ that is $T$-optimized due to the following simple argument: if there is a vertex $v\notin T,$ for which $y^*(v) > y^*(w)$ for all $w \in T$ in the subtree rooted at $v$, then demoting $y^*(v)$ to equal $\max_w y^*(w)$ over all $w$ in the subtree rooted at $v$ leads to a solution whose cost is less than or equal to that of $G^*$. Further, given a GRT $\Tc$ rooted at $r$, and a set $M \subseteq V(\Tc)$ with $r \in M$, it is not difficult to compute an $M$-optimized subtree of $\Tc$. 


\begin{figure}[h]
\begin{minipage}[b]{0.3\linewidth}
\centering
\tikzset{every picture/.style={line width=0.75pt}} 

\begin{tikzpicture}[x=0.5pt,y=0.5pt,yscale=-1,xscale=1]

\draw    (331.5,14.5) -- (371.5,29.5) ;

\draw    (270.5,28.5) -- (302.5,73.5) ;

\draw    (264.5,82.5) -- (302.5,73.5) ;

\draw    (302.5,73.5) -- (347.5,71.5) ;

\draw    (352.5,126.5) -- (389.5,109.5) ;

\draw    (371.5,29.5) -- (405.5,29.5) ;

\draw    (347.5,71.5) -- (371.5,29.5) ;

\draw    (347.5,71.5) -- (389.5,109.5) ;

\draw    (347.5,71.5) -- (388.5,71.5) ;

\draw  [fill={rgb, 255:red, 0; green, 0; blue, 0 }  ,fill opacity=1 ] (339,71.5) .. controls (339,66.81) and (342.81,63) .. (347.5,63) .. controls (352.19,63) and (356,66.81) .. (356,71.5) .. controls (356,76.19) and (352.19,80) .. (347.5,80) .. controls (342.81,80) and (339,76.19) .. (339,71.5) -- cycle ;
\draw  [fill={rgb, 255:red, 255; green, 255; blue, 255 }  ,fill opacity=1 ] (380,71.5) .. controls (380,66.81) and (383.81,63) .. (388.5,63) .. controls (393.19,63) and (397,66.81) .. (397,71.5) .. controls (397,76.19) and (393.19,80) .. (388.5,80) .. controls (383.81,80) and (380,76.19) .. (380,71.5) -- cycle ;
\draw  [fill={rgb, 255:red, 0; green, 0; blue, 0 }  ,fill opacity=1 ] (381,109.5) .. controls (381,104.81) and (384.81,101) .. (389.5,101) .. controls (394.19,101) and (398,104.81) .. (398,109.5) .. controls (398,114.19) and (394.19,118) .. (389.5,118) .. controls (384.81,118) and (381,114.19) .. (381,109.5) -- cycle ;
\draw  [fill={rgb, 255:red, 255; green, 255; blue, 255 }  ,fill opacity=1 ] (344,126.5) .. controls (344,121.81) and (347.81,118) .. (352.5,118) .. controls (357.19,118) and (361,121.81) .. (361,126.5) .. controls (361,131.19) and (357.19,135) .. (352.5,135) .. controls (347.81,135) and (344,131.19) .. (344,126.5) -- cycle ;
\draw  [fill={rgb, 255:red, 255; green, 255; blue, 255 }  ,fill opacity=1 ] (363,29.5) .. controls (363,24.81) and (366.81,21) .. (371.5,21) .. controls (376.19,21) and (380,24.81) .. (380,29.5) .. controls (380,34.19) and (376.19,38) .. (371.5,38) .. controls (366.81,38) and (363,34.19) .. (363,29.5) -- cycle ;
\draw  [fill={rgb, 255:red, 255; green, 255; blue, 255 }  ,fill opacity=1 ] (294,73.5) .. controls (294,68.81) and (297.81,65) .. (302.5,65) .. controls (307.19,65) and (311,68.81) .. (311,73.5) .. controls (311,78.19) and (307.19,82) .. (302.5,82) .. controls (297.81,82) and (294,78.19) .. (294,73.5) -- cycle ;
\draw  [fill={rgb, 255:red, 255; green, 255; blue, 255 }  ,fill opacity=1 ] (262,28.5) .. controls (262,23.81) and (265.81,20) .. (270.5,20) .. controls (275.19,20) and (279,23.81) .. (279,28.5) .. controls (279,33.19) and (275.19,37) .. (270.5,37) .. controls (265.81,37) and (262,33.19) .. (262,28.5) -- cycle ;
\draw  [fill={rgb, 255:red, 0; green, 0; blue, 0 }  ,fill opacity=1 ] (256,82.5) .. controls (256,77.81) and (259.81,74) .. (264.5,74) .. controls (269.19,74) and (273,77.81) .. (273,82.5) .. controls (273,87.19) and (269.19,91) .. (264.5,91) .. controls (259.81,91) and (256,87.19) .. (256,82.5) -- cycle ;
\draw  [fill={rgb, 255:red, 0; green, 0; blue, 0 }  ,fill opacity=1 ] (397,29.5) .. controls (397,24.81) and (400.81,21) .. (405.5,21) .. controls (410.19,21) and (414,24.81) .. (414,29.5) .. controls (414,34.19) and (410.19,38) .. (405.5,38) .. controls (400.81,38) and (397,34.19) .. (397,29.5) -- cycle ;
\draw  [fill={rgb, 255:red, 0; green, 0; blue, 0 }  ,fill opacity=1 ] (323,14.5) .. controls (323,9.81) and (326.81,6) .. (331.5,6) .. controls (336.19,6) and (340,9.81) .. (340,14.5) .. controls (340,19.19) and (336.19,23) .. (331.5,23) .. controls (326.81,23) and (323,19.19) .. (323,14.5) -- cycle ;
\draw    (235.5,121.5) -- (264.5,82.5) ;

\draw  [fill={rgb, 255:red, 0; green, 0; blue, 0 }  ,fill opacity=1 ] (227,121.5) .. controls (227,116.81) and (230.81,113) .. (235.5,113) .. controls (240.19,113) and (244,116.81) .. (244,121.5) .. controls (244,126.19) and (240.19,130) .. (235.5,130) .. controls (230.81,130) and (227,126.19) .. (227,121.5) -- cycle ;
\draw    (302.5,125.5) -- (264.5,82.5) ;

\draw  [fill={rgb, 255:red, 0; green, 0; blue, 0 }  ,fill opacity=1 ] (294,125.5) .. controls (294,120.81) and (297.81,117) .. (302.5,117) .. controls (307.19,117) and (311,120.81) .. (311,125.5) .. controls (311,130.19) and (307.19,134) .. (302.5,134) .. controls (297.81,134) and (294,130.19) .. (294,125.5) -- cycle ;
\draw    (264.5,82.5) -- (266.5,124.5) ;

\draw  [fill={rgb, 255:red, 255; green, 255; blue, 255 }  ,fill opacity=1 ] (258,124.5) .. controls (258,119.81) and (261.81,116) .. (266.5,116) .. controls (271.19,116) and (275,119.81) .. (275,124.5) .. controls (275,129.19) and (271.19,133) .. (266.5,133) .. controls (261.81,133) and (258,129.19) .. (258,124.5) -- cycle ;

\draw (348,90) node [scale=0.8,color={rgb, 255:red, 0; green, 0; blue, 255 }  ,opacity=1 ] [align=left] {4};
\draw (373,46) node [scale=0.8,color={rgb, 255:red, 0; green, 0; blue, 255 }  ,opacity=1 ] [align=left] {3};
\draw (333,33) node [scale=0.8,color={rgb, 255:red, 0; green, 0; blue, 255 }  ,opacity=1 ] [align=left] {2};
\draw (407,49) node [scale=0.8,color={rgb, 255:red, 0; green, 0; blue, 255 }  ,opacity=1 ] [align=left] {1};
\draw (391,88) node [scale=0.8,color={rgb, 255:red, 0; green, 0; blue, 255 }  ,opacity=1 ] [align=left] {3};
\draw (305,91) node [scale=0.8,color={rgb, 255:red, 0; green, 0; blue, 255 }  ,opacity=1 ] [align=left] {2};
\draw (248,82.5) node [scale=0.8,color={rgb, 255:red, 0; green, 0; blue, 255 }  ,opacity=1 ] [align=left] {2};
\draw (235,140.5) node [scale=0.8,color={rgb, 255:red, 0; green, 0; blue, 255 }  ,opacity=1 ] [align=left] {2};
\draw (268,142.5) node [scale=0.8,color={rgb, 255:red, 0; green, 0; blue, 255 }  ,opacity=1 ] [align=left] {2};
\draw (304,143) node [scale=0.8,color={rgb, 255:red, 0; green, 0; blue, 255 }  ,opacity=1 ] [align=left] {1};
\draw (272,45) node [scale=0.8,color={rgb, 255:red, 0; green, 0; blue, 255 }  ,opacity=1 ] [align=left] {2};
\draw (391,127) node [scale=0.8,color={rgb, 255:red, 0; green, 0; blue, 255 }  ,opacity=1 ] [align=left] {3};
\draw (355,144) node [scale=0.8,color={rgb, 255:red, 0; green, 0; blue, 255 }  ,opacity=1 ] [align=left] {2};
\draw (348,53) node   {$r$};

\end{tikzpicture}
\subcaption{A GRT $\Tc$ with vertices\\ in $M$ shown in black.}\label{fig:lrt}
\end{minipage}
\begin{minipage}[b]{0.3\linewidth}
\centering
\tikzset{every picture/.style={line width=0.75pt}} 

\begin{tikzpicture}[x=0.5pt,y=0.5pt,yscale=-1,xscale=1]

\draw    (331.5,14.5) -- (371.5,29.5) ;

\draw    (264.5,82.5) -- (302.5,73.5) ;

\draw    (302.5,73.5) -- (347.5,71.5) ;

\draw    (371.5,29.5) -- (405.5,29.5) ;

\draw    (347.5,71.5) -- (371.5,29.5) ;

\draw    (347.5,71.5) -- (389.5,109.5) ;

\draw  [fill={rgb, 255:red, 0; green, 0; blue, 0 }  ,fill opacity=1 ] (339,71.5) .. controls (339,66.81) and (342.81,63) .. (347.5,63) .. controls (352.19,63) and (356,66.81) .. (356,71.5) .. controls (356,76.19) and (352.19,80) .. (347.5,80) .. controls (342.81,80) and (339,76.19) .. (339,71.5) -- cycle ;
\draw  [fill={rgb, 255:red, 0; green, 0; blue, 0 }  ,fill opacity=1 ] (381,109.5) .. controls (381,104.81) and (384.81,101) .. (389.5,101) .. controls (394.19,101) and (398,104.81) .. (398,109.5) .. controls (398,114.19) and (394.19,118) .. (389.5,118) .. controls (384.81,118) and (381,114.19) .. (381,109.5) -- cycle ;
\draw  [fill={rgb, 255:red, 255; green, 255; blue, 255 }  ,fill opacity=1 ] (363,29.5) .. controls (363,24.81) and (366.81,21) .. (371.5,21) .. controls (376.19,21) and (380,24.81) .. (380,29.5) .. controls (380,34.19) and (376.19,38) .. (371.5,38) .. controls (366.81,38) and (363,34.19) .. (363,29.5) -- cycle ;
\draw  [fill={rgb, 255:red, 255; green, 255; blue, 255 }  ,fill opacity=1 ] (294,73.5) .. controls (294,68.81) and (297.81,65) .. (302.5,65) .. controls (307.19,65) and (311,68.81) .. (311,73.5) .. controls (311,78.19) and (307.19,82) .. (302.5,82) .. controls (297.81,82) and (294,78.19) .. (294,73.5) -- cycle ;
\draw  [fill={rgb, 255:red, 0; green, 0; blue, 0 }  ,fill opacity=1 ] (256,82.5) .. controls (256,77.81) and (259.81,74) .. (264.5,74) .. controls (269.19,74) and (273,77.81) .. (273,82.5) .. controls (273,87.19) and (269.19,91) .. (264.5,91) .. controls (259.81,91) and (256,87.19) .. (256,82.5) -- cycle ;
\draw  [fill={rgb, 255:red, 0; green, 0; blue, 0 }  ,fill opacity=1 ] (397,29.5) .. controls (397,24.81) and (400.81,21) .. (405.5,21) .. controls (410.19,21) and (414,24.81) .. (414,29.5) .. controls (414,34.19) and (410.19,38) .. (405.5,38) .. controls (400.81,38) and (397,34.19) .. (397,29.5) -- cycle ;
\draw  [fill={rgb, 255:red, 0; green, 0; blue, 0 }  ,fill opacity=1 ] (323,14.5) .. controls (323,9.81) and (326.81,6) .. (331.5,6) .. controls (336.19,6) and (340,9.81) .. (340,14.5) .. controls (340,19.19) and (336.19,23) .. (331.5,23) .. controls (326.81,23) and (323,19.19) .. (323,14.5) -- cycle ;
\draw    (235.5,121.5) -- (264.5,82.5) ;

\draw  [fill={rgb, 255:red, 0; green, 0; blue, 0 }  ,fill opacity=1 ] (227,121.5) .. controls (227,116.81) and (230.81,113) .. (235.5,113) .. controls (240.19,113) and (244,116.81) .. (244,121.5) .. controls (244,126.19) and (240.19,130) .. (235.5,130) .. controls (230.81,130) and (227,126.19) .. (227,121.5) -- cycle ;
\draw    (302.5,125.5) -- (264.5,82.5) ;

\draw  [fill={rgb, 255:red, 0; green, 0; blue, 0 }  ,fill opacity=1 ] (294,125.5) .. controls (294,120.81) and (297.81,117) .. (302.5,117) .. controls (307.19,117) and (311,120.81) .. (311,125.5) .. controls (311,130.19) and (307.19,134) .. (302.5,134) .. controls (297.81,134) and (294,130.19) .. (294,125.5) -- cycle ;

\draw (348,90) node [scale=0.8,color={rgb, 255:red, 0; green, 0; blue, 255 }  ,opacity=1 ] [align=left] {4};
\draw (373,46) node [scale=0.8,color={rgb, 255:red, 0; green, 0; blue, 255 }  ,opacity=1 ] [align=left] {2};
\draw (333,33) node [scale=0.8,color={rgb, 255:red, 0; green, 0; blue, 255 }  ,opacity=1 ] [align=left] {2};
\draw (407,49) node [scale=0.8,color={rgb, 255:red, 0; green, 0; blue, 255 }  ,opacity=1 ] [align=left] {1};
\draw (305,91) node [scale=0.8,color={rgb, 255:red, 0; green, 0; blue, 255 }  ,opacity=1 ] [align=left] {2};
\draw (248,82.5) node [scale=0.8,color={rgb, 255:red, 0; green, 0; blue, 255 }  ,opacity=1 ] [align=left] {2};
\draw (235,140.5) node [scale=0.8,color={rgb, 255:red, 0; green, 0; blue, 255 }  ,opacity=1 ] [align=left] {2};
\draw (304,143) node [scale=0.8,color={rgb, 255:red, 0; green, 0; blue, 255 }  ,opacity=1 ] [align=left] {1};
\draw (391,127) node [scale=0.8,color={rgb, 255:red, 0; green, 0; blue, 255 }  ,opacity=1 ] [align=left] {3};
\draw (348,53) node   {$r$};

\end{tikzpicture}
\subcaption{An $M$-optimized GRT, \\obtained from Fig.~\ref{fig:lrt}.}\label{fig:m-optimized}
\end{minipage}
\begin{minipage}[b]{0.3\linewidth}
\centering
\tikzset{every picture/.style={line width=0.75pt}} 

\begin{tikzpicture}[x=0.5pt,y=0.5pt,yscale=-1,xscale=1]

\draw    (331.5,14.5) -- (371.5,29.5) ;

\draw    (264.5,82.5) -- (302.5,73.5) ;

\draw    (302.5,73.5) -- (347.5,71.5) ;

\draw    (371.5,29.5) -- (405.5,29.5) ;

\draw    (347.5,71.5) -- (371.5,29.5) ;

\draw    (347.5,71.5) -- (389.5,109.5) ;

\draw  [fill={rgb, 255:red, 0; green, 0; blue, 0 }  ,fill opacity=1 ] (339,71.5) .. controls (339,66.81) and (342.81,63) .. (347.5,63) .. controls (352.19,63) and (356,66.81) .. (356,71.5) .. controls (356,76.19) and (352.19,80) .. (347.5,80) .. controls (342.81,80) and (339,76.19) .. (339,71.5) -- cycle ;
\draw  [fill={rgb, 255:red, 0; green, 0; blue, 0 }  ,fill opacity=1 ] (381,109.5) .. controls (381,104.81) and (384.81,101) .. (389.5,101) .. controls (394.19,101) and (398,104.81) .. (398,109.5) .. controls (398,114.19) and (394.19,118) .. (389.5,118) .. controls (384.81,118) and (381,114.19) .. (381,109.5) -- cycle ;
\draw  [fill={rgb, 255:red, 255; green, 255; blue, 255 }  ,fill opacity=1 ] (363,29.5) .. controls (363,24.81) and (366.81,21) .. (371.5,21) .. controls (376.19,21) and (380,24.81) .. (380,29.5) .. controls (380,34.19) and (376.19,38) .. (371.5,38) .. controls (366.81,38) and (363,34.19) .. (363,29.5) -- cycle ;
\draw  [fill={rgb, 255:red, 255; green, 255; blue, 255 }  ,fill opacity=1 ] (294,73.5) .. controls (294,68.81) and (297.81,65) .. (302.5,65) .. controls (307.19,65) and (311,68.81) .. (311,73.5) .. controls (311,78.19) and (307.19,82) .. (302.5,82) .. controls (297.81,82) and (294,78.19) .. (294,73.5) -- cycle ;
\draw  [fill={rgb, 255:red, 0; green, 0; blue, 0 }  ,fill opacity=1 ] (256,82.5) .. controls (256,77.81) and (259.81,74) .. (264.5,74) .. controls (269.19,74) and (273,77.81) .. (273,82.5) .. controls (273,87.19) and (269.19,91) .. (264.5,91) .. controls (259.81,91) and (256,87.19) .. (256,82.5) -- cycle ;
\draw  [fill={rgb, 255:red, 0; green, 0; blue, 0 }  ,fill opacity=1 ] (397,29.5) .. controls (397,24.81) and (400.81,21) .. (405.5,21) .. controls (410.19,21) and (414,24.81) .. (414,29.5) .. controls (414,34.19) and (410.19,38) .. (405.5,38) .. controls (400.81,38) and (397,34.19) .. (397,29.5) -- cycle ;
\draw  [fill={rgb, 255:red, 0; green, 0; blue, 0 }  ,fill opacity=1 ] (323,14.5) .. controls (323,9.81) and (326.81,6) .. (331.5,6) .. controls (336.19,6) and (340,9.81) .. (340,14.5) .. controls (340,19.19) and (336.19,23) .. (331.5,23) .. controls (326.81,23) and (323,19.19) .. (323,14.5) -- cycle ;
\draw    (235.5,121.5) -- (264.5,82.5) ;

\draw  [fill={rgb, 255:red, 0; green, 0; blue, 0 }  ,fill opacity=1 ] (227,121.5) .. controls (227,116.81) and (230.81,113) .. (235.5,113) .. controls (240.19,113) and (244,116.81) .. (244,121.5) .. controls (244,126.19) and (240.19,130) .. (235.5,130) .. controls (230.81,130) and (227,126.19) .. (227,121.5) -- cycle ;
\draw    (302.5,125.5) -- (264.5,82.5) ;

\draw  [fill={rgb, 255:red, 0; green, 0; blue, 0 }  ,fill opacity=1 ] (294,125.5) .. controls (294,120.81) and (297.81,117) .. (302.5,117) .. controls (307.19,117) and (311,120.81) .. (311,125.5) .. controls (311,130.19) and (307.19,134) .. (302.5,134) .. controls (297.81,134) and (294,130.19) .. (294,125.5) -- cycle ;
\draw  [dash pattern={on 0.84pt off 2.51pt}] (256.5,56) .. controls (274.5,57) and (293.5,95) .. (307.5,110) .. controls (321.5,125) and (337.5,125) .. (331.5,138) .. controls (325.5,151) and (228.5,149) .. (218.5,147) .. controls (208.5,145) and (223.5,102) .. (228.5,94) .. controls (233.5,86) and (238.5,55) .. (256.5,56) -- cycle ;
\draw  [dash pattern={on 0.84pt off 2.51pt}] (339.5,55) .. controls (353.5,52) and (354.5,50) .. (378.5,61) .. controls (402.5,72) and (422.5,100) .. (423.5,122) .. controls (424.5,144) and (370.5,141) .. (350.5,111) .. controls (330.5,81) and (325.5,58) .. (339.5,55) -- cycle ;
\draw  [dash pattern={on 0.84pt off 2.51pt}] (308.5,4) .. controls (318.5,2) and (408.5,4) .. (419.5,20) .. controls (430.5,36) and (418.5,59) .. (395.5,56) .. controls (372.5,53) and (331.5,47) .. (316.5,32) .. controls (301.5,17) and (298.5,6) .. (308.5,4) -- cycle ;

\draw (348,90) node [scale=0.8,color={rgb, 255:red, 0; green, 0; blue, 255 }  ,opacity=1 ] [align=left] {4};
\draw (373,46) node [scale=0.8,color={rgb, 255:red, 0; green, 0; blue, 255 }  ,opacity=1 ] [align=left] {2};
\draw (333,33) node [scale=0.8,color={rgb, 255:red, 0; green, 0; blue, 255 }  ,opacity=1 ] [align=left] {2};
\draw (407,49) node [scale=0.8,color={rgb, 255:red, 0; green, 0; blue, 255 }  ,opacity=1 ] [align=left] {1};
\draw (305,91) node [scale=0.8,color={rgb, 255:red, 0; green, 0; blue, 255 }  ,opacity=1 ] [align=left] {2};
\draw (248,82.5) node [scale=0.8,color={rgb, 255:red, 0; green, 0; blue, 255 }  ,opacity=1 ] [align=left] {2};
\draw (235,140.5) node [scale=0.8,color={rgb, 255:red, 0; green, 0; blue, 255 }  ,opacity=1 ] [align=left] {2};
\draw (304,143) node [scale=0.8,color={rgb, 255:red, 0; green, 0; blue, 255 }  ,opacity=1 ] [align=left] {1};
\draw (391,127) node [scale=0.8,color={rgb, 255:red, 0; green, 0; blue, 255 }  ,opacity=1 ] [align=left] {3};

\end{tikzpicture}
\subcaption{Rooted spider decomposition of the GRT from Fig.~\ref{fig:m-optimized}.}\label{fig:decomposition}
\end{minipage}
\caption{A GRT with $|M|=7$ and assigned grades $y(\cdot)$, an $M$-optimized GRT, and a rooted spider decomposition.}
\end{figure}

The following definition is due to Klein and Ravi~\cite{KLEIN1995104}:
\begin{definition}[Spider~\cite{KLEIN1995104}]
A \emph{spider} is a tree where at most one vertex has degree greater than two. A spider is identified by its \emph{center}, a vertex from which all paths to the leaves of the spider are vertex-disjoint. A \emph{nontrivial spider} is a spider with at least two leaves.
\end{definition}

A \emph{foot} of a spider is a leaf; if the spider has at least three leaves, then its center is unique, and is also a foot. Klein and Ravi~\cite{KLEIN1995104} show that given a connected graph $G=(V,E)$ and a subset $M\subseteq V$ of vertices, a set of vertex-disjoint nontrivial spiders can be found such that the union of the feet of the spiders contains $M$. We generalize the notion of a spider to the multi-grade setting, which we refer to as a \emph{rooted spider}.

\begin{definition}[Rooted spider]\label{def:rooted_spider}
A \emph{rooted spider} is a GRT $\Tc$ which is also a spider. It is identified by a \emph{center} $v_c$ and a \emph{root} $r$, such that the following properties hold:

\begin{itemize}
    \item The root $r$ is either the center or a leaf of $\Tc$, and the path from $r$ to every vertex in the spider uses vertices of non-increasing grade of service $y(\cdot)$
    \item The paths from $v_c$ to each non-root leaf of $\Tc$ are vertex-disjoint and use non-increasing grades of service
\end{itemize}
\end{definition}

 The resulting tree in Figure~\ref{fig:iteration-example-2} is a rooted spider whose root is distinct from its center.

\begin{lemma} \label{lemma: TailSpider}
Let $\Tc$ be an $M$-optimized GRT rooted at $r$, where $|M| \ge 2$ and $r \in M$. Then $\Tc$ can be decomposed into vertex-disjoint rooted spiders such that the rooted spider leaves and roots belong to $M$ (a rooted spider center may or may not belong to $M$), and the rooted spider leaves, roots, and centers cover the set $M$.
\end{lemma}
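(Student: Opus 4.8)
\medskip\noindent\textbf{Proof plan.} The plan is to mimic the spider decomposition of Klein and Ravi~\cite{KLEIN1995104}, but to install a \emph{root} on each extracted piece by attaching a \emph{tail} that climbs up to a vertex of $M$. A useful first observation is that the grade-respecting condition essentially comes for free: since $\Tc$ is a GRT rooted at $r$, grades are non-increasing along every root-to-leaf path, so \emph{any} connected subtree $S$ of $\Tc$ is automatically grade-respecting when rooted at its \emph{apex} (the vertex of $S$ closest to $r$), and from the apex all grades along $S$ are non-increasing. Consequently, a connected piece of $\Tc$ is a valid rooted spider precisely when it is a spider, its apex lies in $M$ (to serve as the root), and its leaves lie in $M$; the only genuinely new work beyond~\cite{KLEIN1995104} is to guarantee that the apex of each piece lands on a vertex of $M$.

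I would argue by induction on $|M|$ (equivalently, on $|V(\Tc)|$), rooting $\Tc$ at $r$. In the base case $|M|=2$, the $M$-optimized property forces all leaves of $\Tc$ into $M$, so $\Tc$ reduces to an $r$--$x$ path whose endpoints are exactly $M=\{r,x\}$, which is itself a rooted spider. For the inductive step, let $v_c$ be a \emph{deepest} vertex whose subtree $\Tc_{v_c}$ contains at least two vertices of $M$. Because $\Tc$ is $M$-optimized and $v_c$ is deepest, every proper descendant of $v_c$ lies in a subtree with at most one vertex of $M$; since all leaves of $\Tc$ belong to $M$, each child subtree of $v_c$ is a path ending in a single leaf-foot, giving feet $f_1,\ldots,f_k\in M$. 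If $v_c\in M$, I form the rooted spider with root $=$ center $=v_c$ and legs connecting $v_c$ to each $f_i$, then delete the whole subtree $\Tc_{v_c}$ and recurse on the remainder, which is still $M$-optimized, rooted at $r\in M$, and has fewer vertices of $M$. If $v_c\notin M$ (so $k\ge 2$), I attach a tail from $v_c$ up to its nearest $M$-ancestor $r'$ (which exists because $r\in M$ is an ancestor) and form the rooted spider with root $r'$, center $v_c$, and the same legs; its root and leaves lie in $M$ while its center need not, exactly as allowed by Definition~\ref{def:rooted_spider}.

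The main obstacle is the interaction of these tails with vertex-disjointness and with the covering of $M$: deleting a tail (and its endpoint $r'$) can disconnect sibling subtrees hanging off the tail, and the apex of such an orphaned component need not lie in $M$, so the induction hypothesis no longer applies verbatim. I would handle this by strengthening the claim to a forest-level statement in which components are permitted to have a non-$M$ apex: whenever a component's apex $a$ lies outside $M$, I discard $a$ and recurse on its child-subtrees (if $a$ branches) or discard it and recurse on its unique child-subtree (otherwise), thereby re-rooting every residual component at its topmost $M$-vertex before any spider is extracted. Since discarded vertices lie outside $M$ they need not be covered, and vertex-disjointness is preserved because each vertex is placed in at most one spider. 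Each extraction strictly decreases the number of uncovered vertices of $M$, which guarantees termination; any vertex of $M$ that ends up stranded (no center can reach it disjointly together with another foot) is covered by a trivial one-vertex rooted spider, which is permitted since a single vertex is simultaneously its own root, center, and leaf. Verifying that the extracted pieces are genuine spiders (a unique high-degree center) and that their leaves and roots lie in $M$ is then routine, given the deepest-branch choice and the tail construction.
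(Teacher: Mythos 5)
Your inductive skeleton (induct on $|M|$, pick a deepest vertex $v_c$ whose subtree contains at least two vertices of $M$, peel off a spider, recurse) matches the paper's, but you diverge at the one point where the real idea is needed: what to do when $v_c \notin M$. The paper does \emph{not} climb upward. It uses $M$-optimality directly: since $y(v_c) = \max y(w)$ over $M$-vertices $w$ below $v_c$, there is a descendant foot $w \in M$ with $y(w) = y(v_c)$, and every vertex on the $v_c$--$w$ path then has grade exactly $y(v_c)$; so $w$ itself can serve as the root of the spider (Definition~\ref{def:rooted_spider} allows the root to be a leaf), and the piece removed is exactly the subtree of $v_c$. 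Nothing outside that subtree is touched, so no sibling components are ever orphaned. Your alternative -- attaching a tail from $v_c$ up to its nearest $M$-ancestor $r'$ -- is what creates the orphaning problem you then spend the third paragraph trying to patch, and your characterization ``a piece is a valid rooted spider precisely when its apex lies in $M$'' is too restrictive: it is sufficient but not necessary, and abandoning it is precisely the trick you are missing.

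The patch itself does not close the gap. First, discarding non-$M$ apexes of orphaned components can strand a component containing a single vertex of $M$, which you cover by a ``trivial one-vertex rooted spider.'' This may satisfy the letter of the lemma statement, but it destroys the property (delivered by the paper's construction) that every spider in the decomposition contains at least two vertices of $M$, i.e.\ $|L_j| \ge 1$. That property is essential downstream: Lemma~\ref{Lemma: SpiderCost} and Lemma~\ref{lemma:iteration-i} turn each spider into a candidate merge of $1+|L_j|$ GRTs, and the algorithm cannot merge a single GRT, so the inequality $\gamma_n \le c(\Xc_j)/(1+|L_j|)$ cannot be established for a trivial spider and the averaging argument collapses. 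Second, your step ``each child subtree of $v_c$ is a path ending in a single leaf-foot in $M$'' relies on all leaves of the current component lying in $M$; after deleting a tail (or even just a subtree), the surviving components acquire dangling non-$M$ leaves, and your recursion only re-roots at the topmost $M$-vertex -- it never prunes from below, nor re-establishes $M'$-optimality (your claim that the remainder ``is still $M$-optimized'' is false without pruning). The paper handles this explicitly by pruning so that the residual tree is $M'$-optimized before invoking the induction hypothesis. In short: the missing idea is to root each spider at a descendant foot of equal grade, guaranteed by Definition~\ref{def:m-optimized}, rather than at an ancestor.
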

\begin{proof}
We use induction on $|M|$. For the base case $|M|=2$, the decomposition consists of a single rooted spider, namely the path in $\Tc$ from $r$ to the other vertex in $M$, where $r$ is also the root of its rooted spider.

For $|M| \ge 3$, find a vertex $v \in \Tc$ with the property that the subtree $\Tc'$ rooted at $v$ contains at least two vertices in $M$, such that $v$ is furthest from $r$ by number of edges. If $v = r$, then $\Tc$ is already a rooted spider, as no other vertex in $V(\Tc)\setminus\{r\}$ can have degree greater than 2; if there existed $w \neq r$ with degree greater than 2, then the subtree rooted at $w$ has at least two leaves (which belong to $M$), contradicting the choice of $v=r$. In this case, set $r$ to be the root and center of its rooted spider.

Thus, we assume $v \neq r$. The vertex $v$ and its subtree forms a rooted spider with center $v$. If $v\in M$, then set $v$ to be the root of its rooted spider. If $v\notin M$, then since $\Tc$ is $M$-optimized, there exists $w \in M$ in the subtree rooted at $v$ with grade of service $y(w) = y(v)$; set $w$ to be the root of its rooted spider. Remove this rooted spider from the original tree $\Tc$, as well as the edge from $v$ to its parent, to produce a smaller tree $\Tc'$.

Let $M' \subset M$ be the set of vertices in $M$ that remain in $\Tc'$ upon removing the subtree rooted at $v$. If $|M'| = 0$, then the subtree rooted at $v$ is the only rooted spider, giving a valid decomposition. If $|M'| = 1$, then $M' = \{r\}$, so connecting $r$ to $v$ produces a single rooted spider with root $r$ and center $v$ as $y(r) \ge y(v)$. Otherwise $|M'| \ge 2$, so we may prune the $r$-$v$ path so that $\Tc'$ is an $M'$-optimized GRT. By the induction hypothesis, $\Tc'$ can be decomposed into vertex-disjoint rooted spiders over $M'$.
\end{proof}

Figure~\ref{fig:decomposition} gives a rooted spider decomposition of the $M$-optimized GRT from Figure~\ref{fig:m-optimized}.

\begin{corollary} \label{corollary:TailSpider}
Let $\Tc$ be an $M$-optimized GRT. Consider a rooted spider decomposition containing $s$ rooted spiders $\Xc_1, \ldots, \Xc_s$, generated using the method in the proof of Lemma~\ref{lemma: TailSpider}.  Let $r_j$ be the root of the $\Xc_j$, and let $L_j = (M \cap V(\Xc_j)) \setminus r_j$ be the vertices in $M$ contained in $\Xc_j$, not including the root $r_j$. Then $\sum_{j=1}^s (1 + |L_j|) = |M|$.
\end{corollary}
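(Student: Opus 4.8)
The plan is to recognize that the quantity $1+|L_j|$ is nothing more than the number of vertices of $M$ contained in the spider $\Xc_j$, and then to sum this count over all spiders, exploiting the fact that the spiders partition $M$. The whole statement is essentially a bookkeeping consequence of Lemma~\ref{lemma: TailSpider}.

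First I would rewrite each summand. By Lemma~\ref{lemma: TailSpider}, every root $r_j$ belongs to $M$. Since $L_j = (M \cap V(\Xc_j)) \setminus \{r_j\}$ and $r_j \notin L_j$ by definition, we have the disjoint union $M \cap V(\Xc_j) = L_j \cup \{r_j\}$, so that $|M \cap V(\Xc_j)| = |L_j| + 1$. This converts the left-hand side into $\sum_{j=1}^s |M \cap V(\Xc_j)|$, eliminating the explicit $+1$ terms.

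Next I would show that the sets $M \cap V(\Xc_j)$ form a partition of $M$. Pairwise disjointness is immediate: Lemma~\ref{lemma: TailSpider} produces vertex-disjoint spiders, so the vertex sets $V(\Xc_j)$ are pairwise disjoint, and hence so are their intersections with $M$. For coverage, the same lemma guarantees that the leaves, roots, and centers of the spiders cover $M$; in particular each $m \in M$ lies in $V(\Xc_j)$ for some $j$, and therefore in $M \cap V(\Xc_j)$. Consequently $\sum_{j=1}^s |M \cap V(\Xc_j)| = |M|$, which combined with the previous paragraph gives $\sum_{j=1}^s (1 + |L_j|) = |M|$.

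The only point requiring care — and the main thing to justify rather than take for granted — is that $r_j \in M$ for every spider, since otherwise the identity $|M \cap V(\Xc_j)| = |L_j|+1$ would break (the spider could contain $M$-vertices uncounted by the ``$+1$''). This is precisely what the recursive construction in the proof of Lemma~\ref{lemma: TailSpider} enforces, as roots are always selected from $M$, so the corollary follows by directly invoking that lemma without further work.
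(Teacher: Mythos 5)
Your proof is correct and follows essentially the same counting argument as the paper: the identity $1+|L_j| = |M\cap V(\Xc_j)|$ (using $r_j\in M$ from Lemma~\ref{lemma: TailSpider}) combined with the vertex-disjointness of the spiders and their coverage of $M$. The paper's version is terser and additionally remarks that center-to-leaf paths contain no intermediate $M$-vertices, but that observation is not needed for the summation identity itself.
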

\begin{proof}
This statement follows as every vertex in $v \in M$ is either a root of its rooted spider, or is reachable from its spider's center. In particular, the path from the rooted spider's center $v$ to any leaf $w \in M$ does not encounter any other vertices in $M$, as this would contradict the choice of $v$ when computing such a decomposition.
\end{proof}

For the next lemma, we define the following notation. Let $\Fc_n$ ($n\ge 1$) denote the set of GRTs at the beginning of iteration $n$, and let $M_n$ denote the set of GRT roots at the beginning of iteration $n$. Thus, $|\Fc_n| = |M_n|$, and $|\Fc_1| = |T|$. Let $G^*$ be a minimum V-GSST with assigned grades $y^*:V \to \{0,1,\ldots,\ell\}$ and cost $\OPT$.

Consider the $M_n$-optimized GRT obtained by optimizing $G^*$ with respect to $M_n$ and $y^*$. By Lemma~\ref{lemma: TailSpider}, this GRT contains a rooted spider decomposition containing $s$ rooted spiders $\Xc_1, \ldots, \Xc_s$. Consider the $j^{\text{th}}$ rooted spider $\Xc_j$ ($1 \le j \le s$) with root $r_j$, center $v_{c,j}$, and terminals $L_j \subset M_n$ not including its root $r_j$.
Let $c(\Xc_j)$ be the cost of the vertices in $\Xc_j$ within $G^*$, given by $c(\Xc_j) = \sum_{v \in V(\Xc_j)} c_{y^*(v)}(v)$. On the current iteration, a candidate for \textsc{GreedyVGSST} is to select root $r=r_j$, center $v_c=v_{c,j}$, $i = y^*(v_c)$, and $\Sc$ the set of GRTs in $\Fc_n$ whose root is in $L_j$ (so that $|\Sc| = |L_j|$). Let $\widehat{c_j}$ be the cost that \textsc{GreedyVGSST} computes for this candidate (i.e., the numerator in the expression for $\gamma$, eq.~\eqref{eqn:gamma}). We show in Lemma~\ref{Lemma: SpiderCost} that this computed cost is not more than the cost of the $j^{\text{th}}$ rooted spider $\Xc_j$.

\begin{lemma}\label{Lemma: SpiderCost}
Consider the $j^{\text{th}}$ rooted spider $\Xc_j$ in a rooted spider decomposition of the $M_n$-optimized GRT of $G^*$, and consider the candidate choice $r$, $v_c$, $i$, and $\Sc$ as described above with computed cost $\widehat{c_j}$. Then $\widehat{c_j} \le c(\Xc_j)$.
\end{lemma}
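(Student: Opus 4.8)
The plan is to use the rooted spider $\Xc_j$ itself as an explicit witness realizing the candidate that \textsc{GreedyVGSST} evaluates, and to charge the computed cost
\[
\widehat{c_j} = d_i(r,v_c) + w_i(v_c) + \sum_{k=1}^{|\Sc|} d_{R(r_k)}(v_c,r_k)
\]
against the vertex costs of $\Xc_j$ in $G^*$, one vertex at a time. Since each $d_a(\cdot,\cdot)$ is a shortest-path cost under $w_a(\cdot)$ (excluding endpoints), I would first replace every shortest path by the concrete path that $\Xc_j$ already supplies: the root-to-center path of the spider bounds $d_i(r,v_c)$, and the center-to-leaf paths bound each $d_{R(r_k)}(v_c,r_k)$. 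Replacing a shortest path by a specific path only increases the right-hand side, so it suffices to bound the cost of these spider paths.

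Before the main estimate I would record two monotonicity facts. First, the weight functions never increase during the algorithm: the update rule either zeroes $w_j(v)$ or replaces it by $w_j(v)-w_{y(v)}(v)$ with $w_{y(v)}(v)\ge 0$, so starting from $w_j(v)=c_j(v)$ we always have $w_a(v)\le c_a(v)$ for every grade $a$. Second, installation costs are non-decreasing in the grade, i.e.\ $c_a(v)\le c_b(v)$ whenever $a\le b$. The heart of the argument is then to trade the grade used in the candidate's distance terms for the grade each vertex actually carries in $\Xc_j$. Because $\Xc_j$ is grade-respecting with root $r$, grades are non-increasing along the path from $r$; hence every vertex $v$ on the root-to-center path satisfies $y^*(v)\ge y^*(v_c)=i$, and every vertex $v$ on a center-to-leaf path satisfies $y^*(v)\ge y^*(r_k)\ge R(r_k)$, the last inequality holding because each $r_k$ is a terminal and $G^*$ is feasible there. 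Combining with the two monotonicity facts gives $w_i(v)\le c_i(v)\le c_{y^*(v)}(v)$ for a root-to-center vertex, and $w_{R(r_k)}(v)\le c_{R(r_k)}(v)\le c_{y^*(v)}(v)$ for a center-to-leaf interior vertex.

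Finally I would sum these per-vertex bounds and verify that no vertex is double-charged. The spider structure guarantees that the root-to-center path and the center-to-leaf paths are pairwise vertex-disjoint except at $v_c$, and $v_c$ is charged exactly once: the term $d_i(r,v_c)+w_i(v_c)$ accounts for the interior of the root-to-center path together with $v_c$, while each $d_{R(r_k)}(v_c,r_k)$ excludes its endpoints $v_c$ and $r_k$. The roots $r,r_1,\dots,r_{|\Sc|}$ are therefore never charged (their cost is already paid as existing GRT roots), so the set of charged vertices is a subset of $V(\Xc_j)$, each counted at most once at a weight not exceeding $c_{y^*(v)}(v)$. This yields $\widehat{c_j}\le \sum_{v\in V(\Xc_j)} c_{y^*(v)}(v)=c(\Xc_j)$.

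I expect the main obstacle to be precisely this last bookkeeping step: I must make the endpoint conventions of $d_a(\cdot,\cdot)$ line up with the vertices actually present in $\Xc_j$, so that the charge covers each spider vertex once and only once and at a grade no larger than its grade in $G^*$. The degenerate case $r=v_c$ (when the center lies in $M$ and is chosen as root) should be checked separately: there $d_i(r,v_c)=0$ and the single surviving term $w_i(v_c)\le c_{y^*(v_c)}(v_c)$ is still bounded by the cost of $v_c$ in $\Xc_j$. It is also worth confirming that the candidate is admissible, i.e.\ $R(r_k)\le i$ for each $k$, which follows from $i=y^*(v_c)\ge y^*(r_k)\ge R(r_k)$ by the non-increasing grade property along each center-to-leaf path.
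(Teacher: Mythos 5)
Your proof is correct and follows essentially the same route as the paper's: it uses the spider's own root-to-center and center-to-leaf paths as explicit witnesses for the shortest-path distances in $\widehat{c_j}$, and invokes their vertex-disjointness to charge each vertex of $\Xc_j$ at most once against $c_{y^*(v)}(v)$. You additionally spell out the monotonicity facts ($w_a(v)\le c_a(v)$ from the weight-update rule, and the grade comparisons along grade-respecting paths) that the paper's terser proof leaves implicit.
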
 
\begin{proof}
Let $p(u,v)$ denote the $u$-$v$ path in the rooted spider $\Xc_j$ not including endpoints $u$ and $v$, and let $c^*(u,v) = \sum_{v \in p(u,v)} c_{y^*(v)}(v)$ denote the sum of vertex costs along path $p(u,v)$. Because the paths from $v_{c,j}$ to the root $r_j$ or to each $w \in L_j$ are vertex-disjoint by Def.~\ref{def:rooted_spider}, we have
\[ c(\Xc_j) \ge c^*(r_j, v_{c,j}) + \sum_{k=1}^{|L_j|} c^*(v_{c,j}, r_k).\]
However, $\widehat{c_j}$ considers the minimum-cost vertex-weighted paths between $r_j$ and $v_{c,j}$, as well as from $v_{c,j}$ to each $r_k \in L_j$. Thus $c(\Xc_j) \ge \widehat{c_j}$ as desired.
\end{proof}

Let $h_n \ge 2$ denote the number of GRTs in $\Fc_n$ that are merged on the $n^{\text{th}}$ iteration, including the root GRT. Let $\Delta C_n$ denote the actual cost incurred on the $n^{\text{th}}$ iteration of \textsc{GreedyVGSST}; for example, $\Delta C_1 = 4$ and $\Delta C_2 = 26$ in the example in Fig.~\ref{fig:iteration-example-2}. Let $\gamma_n$ denote the minimum cost-to-connectivity ratio \emph{computed by the \textsc{GreedyVGSST} algorithm} on the $n^{\text{th}}$ iteration (e.g. $\gamma_1 = \frac{4}{3}$ and $\gamma_2 = 13$ in Fig.~\ref{fig:iteration-example-2}).

\begin{lemma} \label{lemma:iteration-i}
For each iteration $n \ge 1$ of \textsc{GreedyVGSST}, we have $\dfrac{\Delta C_n}{h_n} \le \dfrac{\OPT}{|\Fc_n|}$.
\end{lemma}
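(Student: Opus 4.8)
The plan is to chain two inequalities: first relate the actual incurred cost $\Delta C_n$ to the minimum cost-to-connectivity ratio $\gamma_n$ computed by the algorithm, obtaining $\Delta C_n / h_n \le \gamma_n$; and then bound $\gamma_n \le \OPT/|\Fc_n|$ via a rooted spider decomposition of $G^*$. Composing these gives the claim.

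For the first inequality, I would recall that $h_n = 1 + |\Sc|$ is exactly the denominator of $\gamma$ in eq.~\eqref{eqn:gamma} for the selected $r, v_c, i, \Sc$. Writing $N_n$ for the corresponding numerator, we have $\gamma_n = N_n / h_n$. The key observation is that the \emph{actual} cost $\Delta C_n$ paid when forming $\Tc_{new}$ is at most $N_n$: the numerator adds $d_i(r,v_c)+w_i(v_c)$ and the terms $d_{R(r_k)}(v_c,r_k)$ separately, so any vertex lying on more than one center-to-root path (or on both the root path and a center-to-root path) is charged once per path, whereas the grade assignment $y(v) \gets \max(\cdot)$ together with the weight bookkeeping charges it only once, at its final grade. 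Hence $\Delta C_n \le N_n$, and therefore $\Delta C_n / h_n \le \gamma_n$.

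For the second inequality, I would fix the $M_n$-optimized GRT obtained from $G^*$; since $M_n$-optimization only demotes grades of service, its cost is at most $\OPT$. As $|\Fc_n| = |M_n| \ge 2$ throughout the \textbf{while} loop, Lemma~\ref{lemma: TailSpider} applies and decomposes this GRT into vertex-disjoint rooted spiders $\Xc_1,\ldots,\Xc_s$ with roots $r_j$, centers $v_{c,j}$, and leaf sets $L_j \subseteq M_n$. For each $j$, the choice $r=r_j$, $v_c=v_{c,j}$, $i=y^*(v_{c,j})$, and $\Sc$ equal to the GRTs of $\Fc_n$ rooted in $L_j$ is a valid candidate on iteration $n$, so by greedy minimality of $\gamma_n$ and Lemma~\ref{Lemma: SpiderCost},
\[ \gamma_n \;\le\; \frac{\widehat{c_j}}{1+|L_j|} \;\le\; \frac{c(\Xc_j)}{1+|L_j|}, \]
i.e.\ $\gamma_n\,(1+|L_j|) \le c(\Xc_j)$. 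Summing over $j$ and using that the spiders are vertex-disjoint (so $\sum_j c(\Xc_j)$ is at most the cost of the $M_n$-optimized GRT, hence at most $\OPT$) together with Corollary~\ref{corollary:TailSpider} (so $\sum_j (1+|L_j|) = |\Fc_n|$) yields $\gamma_n\,|\Fc_n| \le \OPT$, that is $\gamma_n \le \OPT/|\Fc_n|$. Combining with the first inequality gives $\Delta C_n / h_n \le \OPT/|\Fc_n|$.

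The main obstacle I expect is the first inequality $\Delta C_n \le N_n$, since it requires arguing carefully that the weight updates (the reductions $w_j(v) \gets w_j(v)-w_{y(v)}(v)$ and the $\max$ in the grade assignment) never let the truly incurred cost exceed the quantity the algorithm used to make its greedy choice; the overcounting flows the right way precisely because the numerator of $\gamma$ charges shared vertices multiple times. The spider-averaging step is the more routine part, being a direct multi-grade adaptation of the Klein--Ravi argument once Lemma~\ref{Lemma: SpiderCost} and Corollary~\ref{corollary:TailSpider} are available.
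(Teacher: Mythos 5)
Your proposal is correct and follows essentially the same route as the paper's proof: bounding $\Delta C_n/h_n$ by $\gamma_n$ via the overcounting of shared vertices in the numerator of $\gamma$, then comparing $\gamma_n$ against each rooted spider of the $M_n$-optimized decomposition of $G^*$ using Lemma~\ref{Lemma: SpiderCost}, and averaging over the spiders with Corollary~\ref{corollary:TailSpider}. The only cosmetic difference is that you sum the per-spider inequalities for $\gamma_n$ directly, while the paper invokes the equivalent mediant inequality for $\Delta C_n/h_n$.
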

\begin{proof}
Fix an iteration $n \ge 1$, and consider the $M_n$-optimized GRT obtained from $G^*$. 
Recall that for $n=1$, $G^*$ is already $T$-optimized ($M_1=T$). 
By Theorem~\ref{lemma: TailSpider}, there exists a rooted spider decomposition over $M_n$, containing $s \ge 1$ rooted spiders $\Xc_1, \ldots, \Xc_s$.


As \textsc{GreedyVGSST} aims to minimize $\gamma$, we necessarily have
\begin{equation}\gamma_n \le \frac{\widehat{c_j}}{1+|L_j|} \underbrace{\le}_{\text{Lemma }\ref{Lemma: SpiderCost}} \frac{c(\Xc_j)}{1+|L_j|}. \label{eqn:gamma_n}
\end{equation}

The computed cost in $\gamma_n$ is greater than or equal to $\Delta C_n$, as the computed cost may overcount vertex costs appearing on multiple center-to-root paths. Hence~\eqref{eqn:gamma_n} implies $\frac{\Delta C_n}{h_n} \le \gamma_n \le \frac{c(\Xc_j)}{1+|L_j|}$ for all rooted spiders $\Xc_j$.


We use the simple algebraic fact that for non-negative numbers $a$, $x_1, \ldots, x_s$, $y_1, \ldots, y_s$, if $a \le \frac{x_i}{y_i}$ for all $1 \le i \le s$, then $a \le (\sum_{j=1}^s x_j)/(\sum_{j=1}^s y_j)$; this is easily verified by writing $ay_i \le x_i$, then summing over $i$. Applying this fact over all rooted spiders, we have $\frac{\Delta C_n}{h_n} \le \frac{\sum_{j=1}^s c(\Xc_j)}{\sum_{j=1}^s (1 + |L_j|)}.$

Observe that $\sum_{j=1}^s c(\Xc_j) \le \OPT$, as the vertices in a rooted spider decomposition of $G^*$ are a subset (not necessarily a proper subset) of $V(G^*)$. The denominator, $\sum_{j=1}^s (1+|L_j|)$, equals $|M_n|=|\Fc_n|$ by Corollary~\ref{corollary:TailSpider}. 
The lemma follows.
\end{proof}

We are ready to prove Theorem~\ref{thm:main}, that \textsc{GreedyVGSST} is a $(2\ln |T|)$-approximation to the V-GSST problem.

\begin{proof}[Proof of Theorem~\ref{thm:main}.]






 Lemma~\ref{lemma:iteration-i} rearranges to $h_n \ge \frac{\Delta C_n}{\OPT}|\Fc_n|$. Division by zero can occur if $\OPT = 0$, e.g. there is a solution where every vertex $v$ is assigned its minimum grade of service $R(v)$. In this case, \textsc{GreedyVGSST} necessarily returns a solution with zero cost, as $\gamma_n = 0$ on every iteration. Hence, we assume $\OPT > 0$. 
 
 Suppose there are $I$ iterations, where $|F_I| \ge 2$ and $|F_{I+1}|:=1$. As $h_n \ge 2$, we equivalently have $\frac{1}{2}h_n \le h_n - 1$. We have $|\Fc_{n+1}| \le |\Fc_n| - (h_n - 1)$ for each iteration $1 \le n \le I$, so Lemma~\ref{lemma:iteration-i} implies the following.

\begin{align}
|\Fc_{n+1}| \le |\Fc_n| - (h_n - 1) &\le |\Fc_n| - \frac{1}{2}h_n \nonumber \\
&\le |\Fc_n|\left(1 - \frac{1}{2} \cdot \frac{\Delta C_n}{\OPT}\right) \label{eq:fn3}
\end{align}

The remainder of the proof 
relies on unraveling the inequalities and taking the logarithm of both sides.

\begin{align}
|\Fc_{I+1}| &\le |\Fc_1| \prod_{n=1}^{I} \left(1 - \frac{1}{2} \cdot \frac{\Delta C_n}{\OPT}\right) \nonumber \\
\ln |\Fc_{I+1}| &\le \ln |\Fc_1| + \sum_{n=1}^{I} \ln \left(1 - \frac{1}{2} \cdot \frac{\Delta C_n}{\OPT}\right) \le \ln |\Fc_1| - \sum_{n=1}^{I} \frac{\Delta C_n}{2\cdot \OPT} \label{eq:ln}\\
\intertext{
where~\eqref{eq:ln} uses the fact that $\ln (1-x) \le -x$ for $x \in [0,1)$. Note that $0 \le \frac{\Delta C_n}{2\cdot \OPT} < 1$ as~\eqref{eq:fn3} implies $1 - \frac{1}{2} \cdot \frac{\Delta C_n}{\OPT} \ge \frac{|F_{n+1}|}{|F_n|} > 0$. Then~\eqref{eq:ln} implies}
\sum_{n=1}^{I} \Delta C_n &\le 2 \cdot \OPT (\ln |\Fc_1| - \ln |\Fc_{I+1}|) = 2 \cdot \OPT (\ln |T| - \ln 1)=2\ln |T| \OPT. \nonumber
\end{align}
completing the proof, as $\sum_{n=1}^I \Delta C_n$ is the cost of the solution {\sc GreedyVGSST} returns.
\end{proof}

It is worth noting that the proof of Theorem~\ref{thm:main} differs from that of Klein and Ravi~\cite{KLEIN1995104} in how a rooted spider decomposition in $G^*$ is considered at each iteration. Once the inequality relating $h_n$ with $\Delta C_n$, $\OPT$, and $|\Fc_n|$ (Lemma~\ref{lemma:iteration-i}) is established, the remainder of the proof is similar to that of Klein and Ravi. Note that $|\Fc_i|$ is strictly decreasing on each iteration, so the number of iterations $I$ is at most $|T|=O(|V|)$. Each iteration can be carried out in polynomial time (Lemma~\ref{lemma:polynomial}), thus \textsc{GreedyVGSST} runs in polynomial time.

In summary, several key techniques allow for the generalization of the KR algorithm~\cite{KLEIN1995104} to the V-GSST problem. First, merging multiple GRTs is non-trivial, as we must ensure that the resulting tree is also a GRT. Our approach is to connect a root GRT, to a center, to a subset $\Sc$ of GRTs. This leads to a time complexity increase by a factor of $\ell$. Second, the analysis of the KR algorithm~\cite{KLEIN1995104} relies on the existence of a spider decomposition in the optimal VST solution. For our analysis, we introduce grade-respecting ``rooted spiders,'' characterized by a root, center, and terminal leaves. Third, instead of contracting subtrees computed on iteration $i$ into ``supernodes,'' in the \textsc{GreedyVGSST} algorithm, it is sufficient to only consider distances between GRT roots, and compute the actual tree at the end.


\section{Conclusions and future work}

We presented a generalization of the VST problem to multiple levels or grades of service and showed that the resulting V-GSST problem admits a $(2\ln |T|)$-approximation, which is surprising as the approximation ratio is optimal (to within a constant), and nearly matches that of the VST problem. The analysis relies on what we call a rooted spider decomposition, which we believe can be of use in other multi-level network design problems.
It will be interesting to investigate whether similar generalizations of other graph sparsification problems can be approximated equally as well as their corresponding single-grade problems, and to evaluate the performance of these algorithms on large real-world graphs.






\newpage
\bibliographystyle{abbrv} 
\bibliography{references,mlst2}

\newpage
\appendix
\section{V-GSST to DST reduction} \label{apdx:dst}
The \emph{directed} Steiner tree problem (DST) is often defined as follows: given a directed graph $G = (V,E)$ with edge weights, a set $T$ of terminals, and a root vertex $r \in V$, compute a Steiner arborescence (directed tree) rooted at $r$, so that there exists an $r$-$v$ path for each $v \in T$. Segev~\cite{Segev:1987:NST:34241.34242} shows that the VST problem can be transformed into an instance of DST, by replacing each edge $uv$ with two directed edges $(u,v)$, $(v,u)$, where the weight of an edge $(u,v)$ equals the weight of its incoming vertex. Set the root $r$ to be any terminal.

Similarly, we show that the V-GSST problem with arbitrary costs $c_i(v)$ can be formulated as an instance of DST. Given a graph $G = (V,E)$, terminals $T$ with required grades of service $R:T \to \{1,2,\ldots,\ell\}$, and vertex costs $c_i(v)$, construct $\ell$ directed copies of $G$, denoted $G^{\ell}$, $G^{\ell-1}$, \ldots, $G^1$. Given $v \in V$, let $v^{i}$ denote the copy of vertex $v$ in $G^i$. For all $\{u,v\} \in E$, set the costs of directed edges $(u^i, v^i)$ and $(v^i, u^i)$ in $G^i$ to be $c_i(v)$ and $c_i(u)$, respectively. Finally, for all $v \in V$ and $i = 1, 2, \ldots, \ell-1$, add the directed edge $(v^{i+1}, v^i)$ with cost zero. The interpretation is that if $v^i$ is spanned in a Steiner arborescence, then $v^{i-1}, \ldots, v^1$ are also spanned.

Finally, assign all vertices $v^{R(v)}$ to be terminals in the transformed instance of DST. For the root vertex, select any vertex $w$ with $R(w) = \ell$, and set $w^{\ell}$ to be the root. Thus, the transformed instance of DST contains $|V|\ell$ vertices, $2|E|\ell + |V|(\ell-1)$ directed edges, and $|T|$ terminals. It is not too hard to show that, given an optimal solution to the V-GSST problem, one can construct an equivalent solution to the DST solution with the same cost, and vice versa.

\section{Proof of Proposition~\ref{proposition:top-down}} \label{apdx:top-down}
\begin{proof}
Here, we can show that $\TOP_i \le \OPT$ for all $1\le i \le \ell$. Note that, when determining a set of vertices to install grade $i$ facilities on (line~\ref{line:td} in Algorithm~\ref{alg:td}), a candidate solution is to consider the set of all vertices containing a facility of grade $i$ or greater within the optimal V-GSST $G^*$, and install facilities of grade $i$ for each vertex in this set. As some vertices may already have facilities of a higher grade installed, the cost incurred when determining $E_i$ is not more than the cost of the facilities of grade $i$ or greater in $G^*$, which is upper bounded by $\OPT$. Hence $\TOP_i \le \OPT$. This immediately implies $\TOP \le \ell \cdot \OPT$.
\end{proof}

\section{Integer linear programming formulation for V-GSST} \label{apdx:ilp}
The following ILP formulation generalizes the cut-based ILP formulation for VST given by Demaine et al.~\cite{Demaine}.

Given some integer $i \le \ell$ and $S \subseteq V$, let $f_i(S) = 1$ if at least one terminal in $T$, but not every terminal, with required grade of service at least $i$, is in $S$. Let $f_i(S) = 0$ otherwise. Let $\Gamma(S)$ be the neighborhood of $S$, defined as the set of vertices adjacent to at least one vertex in $S$ but not in $S$. For $v \in V$ and $i \in \{1,\ldots,\ell\}$, let $x_v^i$ be a binary indicator variable defined as follows:

\[ x_v^i = \begin{cases}
1 & \text{vertex $v$ contains a facility of grade $i$ or higher} \\
0 & \text{otherwise}
\end{cases}
\]
An ILP formulation for the V-GSST problem is as follows:
\begin{align}
    \min &\sum_{v \in V} \sum_{i=1}^{\ell} (w_i(v) - w_{i-1}(v))x_v^i \text{ subject to} \label{eqn:ilp1}\\
    \sum_{v \in \Gamma(S)} x_v^i &\ge f_i(S) & \forall v \in V; S \subseteq V  \label{eqn:ilp2} \\
    x_v^{i} &\ge x_v^{i+1} & \forall v \in V; i \in \{1,2,\ldots,\ell-1\} \label{eqn:ilp3} \\
    x_v^i &\in \{0,1\} & \forall v \in V; i \in \{1,2,\ldots,\ell\} \label{eqn:ilp4}
\end{align}

The objective~\eqref{eqn:ilp1} splits the cost of installing a facility on $v$ into incremental costs, where $w_0(v) = 0$ by definition. Constraint~\eqref{eqn:ilp2} enforces that, for any subset $S$ containing at least one but not all vertices of required grade at least $i$, the cut is crossed by at least one edge in the solution. Constraint~\eqref{eqn:ilp3} enforces that if a facility is installed on $v$ with grade $i+1$ or higher, then it is installed with grade $i$, $i-1$, \ldots or higher. Finally, constraint~\eqref{eqn:ilp4} ensures the $x_v^i$'s are binary.

Such an ILP will only output the $x_v^i$'s, though one can easily extract the assigned grades of service $y(v)$ and construct a valid V-GSST solution given this information.

\end{document}